\providecommand{\tabularnewline}{\\}
\theoremstyle{plain}
\newtheorem{thm}{\protect\theoremname}
  \theoremstyle{plain}
  \newtheorem{lem}{\protect\lemmaname}
  \theoremstyle{plain}
  \newtheorem{cor}{\protect\corollaryname}
  \providecommand{\lemmaname}{Lemma}
\providecommand{\corollaryname}{Corollary}
\providecommand{\theoremname}{Theorem}
\begin{document}

\title{Exploring the ``Middle Earth'' of Network Spectra via a Gaussian
Matrix Function}

\author{Ernesto Estrada, Alhanouf Ali Alhomaidhi, Fawzi Al-Thukair}

\affiliation{Department of Mathematics \& Statistics, University of Strathclyde,
26 Richmond Street, Glasgow G11XQ, UK, Department of Mathematics,
King Saud University, Saudi Arabia}
\begin{abstract}
We study a Gaussian matrix function of the adjacency matrix of artificial
and real-world networks. In particular, we study the Gaussian Estrada
index---an index characterizing the importance of eigenvalues close
to zero. This index accounts for the information contained in the
eigenvalues close to zero in the spectra of networks. Here we obtain
bounds for this index in simple graphs, proving that it reaches its
maximum for star graphs followed by complete bipartite graphs. We
also obtain formulas for the Estrada Gaussian index of Erd\H{o}s-R�nyi
random graphs as well as for the Barab�si-Albert graphs. We also show
that in real-world networks this index is related to the existence
of important structural patterns, such as complete bipartite subgraphs
(bicliques). Such bicliques appear naturally in many real-world networks
as a consequence of the evolutionary processes giving rise to them.
In general, the Gaussian matrix function of the adjacency matrix of
networks characterizes important structural information not described
in previously used matrix functions of graphs.
\end{abstract}
\maketitle
\bigskip{}

\textbf{The spectrum of a network---the set of its eigenvalues---provides
important information about the structural and dynamical properties
of the corresponding system. Most of the functions used to study network
spectra give more weight to the largest modular eigenvalues. Then,
the information contained in the eigenvalues close to the centre of
the spectra, i.e, those close to zero, has remained totally unexplored
in the study of graph spectra. Here we study a Gaussian matrix function
that gives more weights to the eigenvalues closest to the centre of
the spectrum of a network. Using this function we extract important
structural information hidden in the spectra of networks, such as
emergence of complete bipartite subgraphs (bicliques) which appear
naturally in many real-world networks as a consequence of the evolutionary
processes giving rise to them. These bicliques are also ubiquitous
in random networks generated by preferential attachment mechanisms,
such as the Barab�si-Albert model. In this work we provide a series
of analytical results that pave the way for further analysis and uses
of this Gaussian matrix function to understand network structure and
dynamics.}

\pagebreak{}

\section{Introduction}

Matrix functions \cite{Function of matrices} have emerged as an important
mathematical tool for studying networks \cite{Estrada Higham}. The
concepts of communicability \cite{Communicability}, subgraph centrality
\cite{Protein folding,Subgraph Centrality} (see also \cite{Estrada Hatano Benzi}
for a review) and Katz index \cite{Katz centrality} are derived from
matrix functions $f\left(A\right)$ of the adjacency matrix and allow
the characterization of local structural properties of networks. The
trace of $f\left(A\right)$, which is known as the Estrada index of
the graph \cite{Estrada index,Estrada index 1,Estrada index 2}, is
a useful characterization of the global structure of a graph and it
has found applications as an index of natural connectivity for studying
robustness of networks \cite{Natural connectivity 1,Natural connectivity 2}.
These initial studies have motivated more recent developments in the
theory of graph-theoretic matrix function studies \cite{Benzi 1,Benzi 2,Benzi 3}.
All these indices have found multiple applications for studying real-world
social, ecological, biological, infrastructural, and technological
systems represented by networks \cite{Michaelbook,EstradaBook,LucianoReview}.
Here we will use interchangeably the terms networks and graphs and
will follow standard notation as in \cite{EstradaBook}. The greatest
appeal of the use of functions of the adjacency matrix for studying
graphs is that when representing them in terms of a Taylor function
expansion: $f\left(A\right)=\sum_{k=0}^{\infty}c_{k}A^{k}$, the entries
of the $k$th power of the adjacency matrix provides information about
the number of walks of length $k$ between the corresponding pair
of (not necessarily different) nodes (see next section for formal
definitions). Then, the important ingredient of the definition of
$f\left(A\right)$ lies in the use of the coefficients $c_{k}$. The
use of $c_{k}=k!^{-1}$ gives rise to the exponential function of
the adjacency matrix, which is the basis of the communicability/subgraph
centrality. On the other hand, selecting $c_{k}=\alpha^{-k}$ gives
rise to the resolvent of the adjacency matrix, which is the basis
of the Katz centrality index \cite{Katz centrality}. Either of these
two coefficients is selected arbitrarily among all the existing possibilities.
However, they have proved to be very useful in practice and not very
much improvement is obtained by changing the coefficients to account
for bigger or smaller penalization of the walks according to their
length \cite{Zooming in and out}.

Here we propose to investigate the information contained in the mid
part of the spectrum of the adjacency matrix of graphs and networks
using a new adjacency matrix function. The adjacency matrix of a simple
graph always contains positive and negative eigenvalues. Then, we
will refer here to the region close to the zero eigenvalue as the
middle part of the spectrum. This is only truly the middle part in
bipartite networks where the spectrum is symmetric, but we will use
the term without loss of generality for any graph. This region of
the spectrum is totally unexplored for complex networks. However,
there are areas in which the zero eigenvalue plays a fundamental role.
For instance, when the adjacency matrix represents the tight-binding
Hamiltonian in the H�ckel molecular orbital (HMO) method (see \cite{HMO_1,HMO_2}
for recent reviews), the zero eigenvalue and its multiplicity (graph
nullity) represent important parameters related to the molecular stability
and molecular magnetic properties (see \cite{Graph nullity review}
for a review). In these cases the highest occupied (HOMO) and lowest
unoccupied molecular orbitals (LUMO), which correspond to the smallest
positive and the smallest negative eigenvalue of $A$, respectively,
play the most fundamental role in the chemical reactivity. It can
be said that everything interesting in Chemistry takes place with
the involvement of the eigenvalues closest to zero. For instance,
many chemical reactions and electron transfer complexes involve electron
transfers between the HOMO of one molecule and the LUMO of another
\cite{Fukui,Fukui_2,Frontiers orbitals}.

Matrix functions of the type of $f\left(A\right)=\sum_{k=0}^{\infty}c_{k}A^{k}$
are characterized by the fact that they give the highest weight to
the largest eigenvalue of the adjacency matrix. For a simple example
let us consider the trace of $f\left(A\right)=\exp\left(A\right)$
of a simple, connected network, which can be written as $tr\exp\left(A\right)=\sum_{j=1}^{n}\exp\left(\lambda_{j}\right)$,
where $n$ is the order of the graph and $\lambda_{1}>\lambda_{2}\geq\cdots\geq\lambda_{n}$
are the eigenvalues of $A$. It is clear that if the spectral gap
of the adjacency matrix, $\lambda_{1}-\lambda_{2}$, is very large,
$tr\exp\left(A\right)$ depends only of the largest eigenvalue $\lambda_{1}$.
This is not a strange situation in real-world networks, where it is
typical to find very large spectral gaps for their adjacency matrix.
In these cases the use of functions of the type $f\left(A\right)$
makes that the structural information contained in the smaller eigenvalues
and eigenvectors of the adjacency matrix is not captured by the index.
A similar situation happens if we consider $f\left(-A\right)$ \cite{Negative temperature}.
In this case we give more weight to the smallest eigenvalue/eigenvector
of the adjacency matrix and the information contained in the largest
ones is again lost.

In this work we study a Gaussian adjacency matrix function $f\left(-A^{2}\right)$
as a way to characterize the structural information of graphs giving
more importance to the eigenvalues/eigenvectors in the middle part
of the graph spectrum. Similar Gaussian operators may arise in quantum
mechanics of many body systems \cite{Gaussian operator 1,Gaussian operator 2}
as well as as the electronic partition function in renormalized tight
binding Hamiltonians \cite{renormalized,Estrada Benzi Gaussian}.
We start by proving some elementary results for some of the indices
derived from $f\left(-A^{2}\right)$ for general graphs. In particular
we study here properties of $H=trf\left(-A^{2}\right)$. We show that
although the graph nullity---the multiplicity of the zero eigenvalue
of the adjacency matrix of the graph---plays an important role in
the values of this index, the $H$ index contains more structural
information than the graph nullity even for small simple graphs. We
then prove that among the graphs with $n$ nodes, the maximum of the
$H$ index is always obtained for the star graph followed by other
complete bipartite graphs. Then, we obtain analytic expressions for
this index in random graphs with Poisson and power-law degree distribution,
showing that the last ones always display larger values of the $H$
index than the first ones. \textcolor{black}{Finally, we study more
than 60 real-world networks representing a large variety of complex
systems. In this case we study the }$H$ index normalized by the network
size, $\hat{H}$. We found that the networks with the largest $\hat{H}$
index correspond to those having relatively large bicliques---complete
bipartite subgraphs, which can be created by different evolutionary
mechanisms depending on the kind of complex system considered. Although
there are important network characteristics influencing the $\hat{H}$
index, such as degree distribution and the degree assortativity, we
show here that they are not unique in determining the high values
of this index observed for certain networks. This new matrix function
for graphs and networks may represent an important addition to the
characterization of important properties of these systems which have
remained unexplored due to the lack of characterizations of the 'middle
region' of graph spectra.

\section{Preliminaries }

Let us introduce some definitions, notations, and properties associated
with networks to make this work self-contained. We will use interchangeably
the terms graphs and networks in this work. A \textit{graph} $\Gamma=(V,E)$
is defined by a set of $n$ nodes (vertices) $V$ and a set of $m$
edges $E=\{(u,v)|u,v\in V\}$ between the nodes. Here we will consider
simple graphs without multiple edges, self-loops and direction of
the edges. A \textit{walk} of length $k$ in $G$ is a set of nodes
$i_{1},i_{2},\ldots,i_{k},i_{k+1}$ such that for all $1\leq l\leq k$,
$(i_{l},i_{l+1})\in E$. A \textit{closed walk} is a walk for which
$i_{1}=i_{k+1}$. A \textit{path} is a walk with no repeated nodes. 

Let $A$ be the adjacency operator on $\ell_{2}(V)$, namely $(Af)(p)=\sum_{q:{\rm dist}(p,q)=1}f(q)$
. For simple finite graphs $A$ is the symmetric adjacency matrix
of the graph, which has entries

\[
a_{uv}=\left\{ \begin{array}{ll}
1 & \mbox{if }(u,v)\in E\\
0 & \mbox{otherwise}
\end{array}\right.\qquad\forall u,v\in V.
\]

The degree $k_{i}$ of the node $i$ is the number of edges incident
to it, equivalently $k_{i}=\sum_{j}a_{ij}$. In the particular case
of an undirected network as the ones studied here, the associated
adjacency matrix is symmetric, and thus its eigenvalues are real.
We label the eigenvalues of $A$ in non-increasing order: $\lambda_{1}>\lambda_{2}\geq\ldots\geq\lambda_{n}$.
Since $A$ is a real-valued, symmetric matrix, we can decompose $A$
into $A=U\Lambda U^{T}$ where $\Lambda$ is a diagonal matrix containing
the eigenvalues of $A$ and $U=[\mathbf{\overrightarrow{\psi}}_{1},\ldots,\mathbf{\overrightarrow{\psi}}_{n}]$
is orthogonal, where $\mathbf{\overrightarrow{\psi}}_{i}$ is an eigenvector
associated with $\lambda_{i}$. Because the graphs considered here
are connected, $A$ is irreducible and from the Perron-Frobenius theorem
we can deduce that $\lambda_{1}>\lambda_{2}$ and that the leading
eigenvector $\mathbf{q}_{1}$, which will be sometimes referred to
as the \textit{Perron vector}, can be chosen such that its components
$\mathbf{\mathbf{\overrightarrow{\psi}}}_{1}(u)$ are positive for
all $u\in V$. 

Hereafter we will refer to the following function as the communicability
function of the graph \cite{Communicability,Estrada Higham,Estrada Hatano Benzi}.
Let $u$ and $v$ be two nodes of $\Gamma$. The communicability function
between these two nodes is defined as 

\[
G_{uv}=\sum_{k=0}^{\infty}\frac{\left(A^{k}\right)_{uv}}{k!}=\left(\exp\left(A\right)\right)_{uv}=\sum_{k=1}^{n}e^{\lambda_{k}}\mathbf{\mathbf{\overrightarrow{\psi}}}_{k}(u)\mathbf{\overrightarrow{\psi}}_{k}(v),
\]

which is an important quantity for studying communication processes
in networks. It counts the total number of walks starting at node
$u$ and ending at node $v$, weighted in decreasing order of their
length by a factor $\frac{1}{k!}$; therefore it is considering shorter
walks more influential than longer ones. The $G_{uu}$ terms of the
communicability function characterize the degree of participation
of a node in all subgraphs of the network, giving more weight to the
smaller ones. Thus, it is known as the subgraph centrality of the
corresponding node \cite{Subgraph Centrality}. The following quantity
is known in the algebraic graph theory literature as the Estrada index
of the graph:

\[
EE\left(G\right)=\sum_{u=1}^{n}G_{uu}=tr\left(\exp\left(A\right)\right)=\sum_{k=1}^{n}e^{\lambda_{k}},
\]

which is a characterization of the global properties of a network.
In its generalized form $EE\left(G,\beta\right)=tr\left(\exp\left(\beta A\right)\right)=\sum_{k=1}^{n}e^{\beta\lambda_{k}},$
it represents the statistical-mechanics partition function of the
graph where $\beta$ represents the inverse temperature.

In some parts of this work we will consider the following integral

\begin{equation}
I_{\gamma}(x)=\frac{1}{\pi}\stackrel[0]{\pi}{\int}\cos(\gamma\theta)\exp(x\cos\theta)d\theta-\frac{\sin(\gamma\pi)}{\pi}\stackrel[0]{\infty}{\int}\exp(-x\cosh t-\gamma t)dt,
\end{equation}

which corresponds to the modified Bessel function of the first kind.

\section{Gaussian Adjacency Matrix Function of Networks}

With the goal of accounting for the influence of the eigenvalues close
to middle of the spectrum of the adjacency matrix of a graph, i.e.,
those close to zero, we start here by introducing the following matrix
function

\begin{equation}
\tilde{G}=\sum_{k=0}^{\infty}\dfrac{\left(-A^{2}\right)^{k}}{k!}=\exp(-A^{2}).
\end{equation}

By obvious reasons we will call it the \textit{Gaussian matrix function}
of $A$. Let $\tilde{G}_{pq}$ be the \textit{Gaussian communicability
function} between the nodes $p$ and $q$ based on $-A^{2}$. That
is,
\begin{equation}
\tilde{G}_{pq}=(\exp(-A^{2}))_{pq}
\end{equation}

Correspondingly, $\tilde{G}_{pp}$ is the \textit{Gaussian subgraph
centrality} based on the same matrix function. The trace of $\exp(-A^{2})$
will be designated by 
\begin{equation}
H=tr(\exp(-A^{2})),
\end{equation}

which corresponds to the \textit{Gaussian Estrada index} of the graph.
It is very important to mention that for calculating the index $H$
we do not need to obtain explicitly the exponential matrix of $-A^{2}$.
We are not interested here in the development of such kind of techniques
but the reader is directed to the excellent work of Benzi and Boito
\cite{Benzi_Boito} for a discussion of efficient� techniques for
estimating the trace of an exponential matrix that do not require
computing every entry of the matrix exponential.� 

Obviously, using the spectral decomposition of the adjacency matrix
we can express both indices as

\begin{equation}
\tilde{G}_{pq}=\sum_{j=1}^{n}\psi_{j,p}\psi_{j,q}\exp\left(-\lambda_{j}^{2}\right),
\end{equation}

\begin{equation}
H=\sum_{j=1}^{n}\exp\left(-\lambda_{j}^{2}\right).
\end{equation}

Let $\eta\left(A\right)$ be the nullity of the adjacency matrix $A$,
i.e., the dimension of the null space of $A$. In spectral graph theory
$\eta=\eta\left(A\right)$ is known as the \textit{graph nullity}.
Then, it is obvious that the $H$ index is related to $\eta$ as follows:

\begin{equation}
H\geq\eta,
\end{equation}

with both indices identical if and only if $\lambda_{j}=0$, for all
$j$, which is attained only for the trivial graph, i.e., the graph
with $n$ nodes and no edges. Indeed, 

\begin{equation}
H=\eta+\sum_{\lambda_{j}\neq0}\exp\left(-\lambda_{j}^{2}\right).\label{eq:H-nullity}
\end{equation}

Then, it is interesting at least empirically, to explore the relation
between $H$ and $\eta$ for simple graphs. We investigate all the
connected graphs with $n\leq8$ for which we obtain both $H$ and
$\eta$. The correlation between both indices for the 11,117 connected
graphs with 8 nodes is illustrated in Figure (\ref{general correlations}).
Although the correlation is statistically significant---the Pearson
correlation coefficient is 0.74---it hides the important differences
between the two indices. For instance, there are 5,724 graphs with
zero nullity among all the connected graphs with 8 nodes. For these
graphs $1.484\leq H\leq3.629$, which represents a wide range of values
taking into account that the minimum and maximum values of $H$ for
all connected graphs with 8 nodes are 1.484 and 6, respectively. It
is also easy to see that there are graphs having nullity zero which
have larger $H$ indices than some graphs having nullity one, two
or three. The results are very similar for $n<8$ and they are not
shown here. In the Figure (\ref{general correlations}) we show the
graphs with the largest $H$ indices among all connected graphs with
8 nodes and nullity zero or one. These graphs show a common pattern
containing several complete bipartite subgraphs. For instance, every
yellow node in the Figure (\ref{general correlations}) is connected
to every red ones, every red is connected to every blue and every
blue is connected to the green one, while there is no yellow-yellow,
red-red or blue-blue connections. This pattern will be revealed when
we study the mathematical properties of this index and its importance
will be analyzed for real-world networks.

\begin{figure}
\begin{centering}
(a)\includegraphics[width=0.33\textwidth]{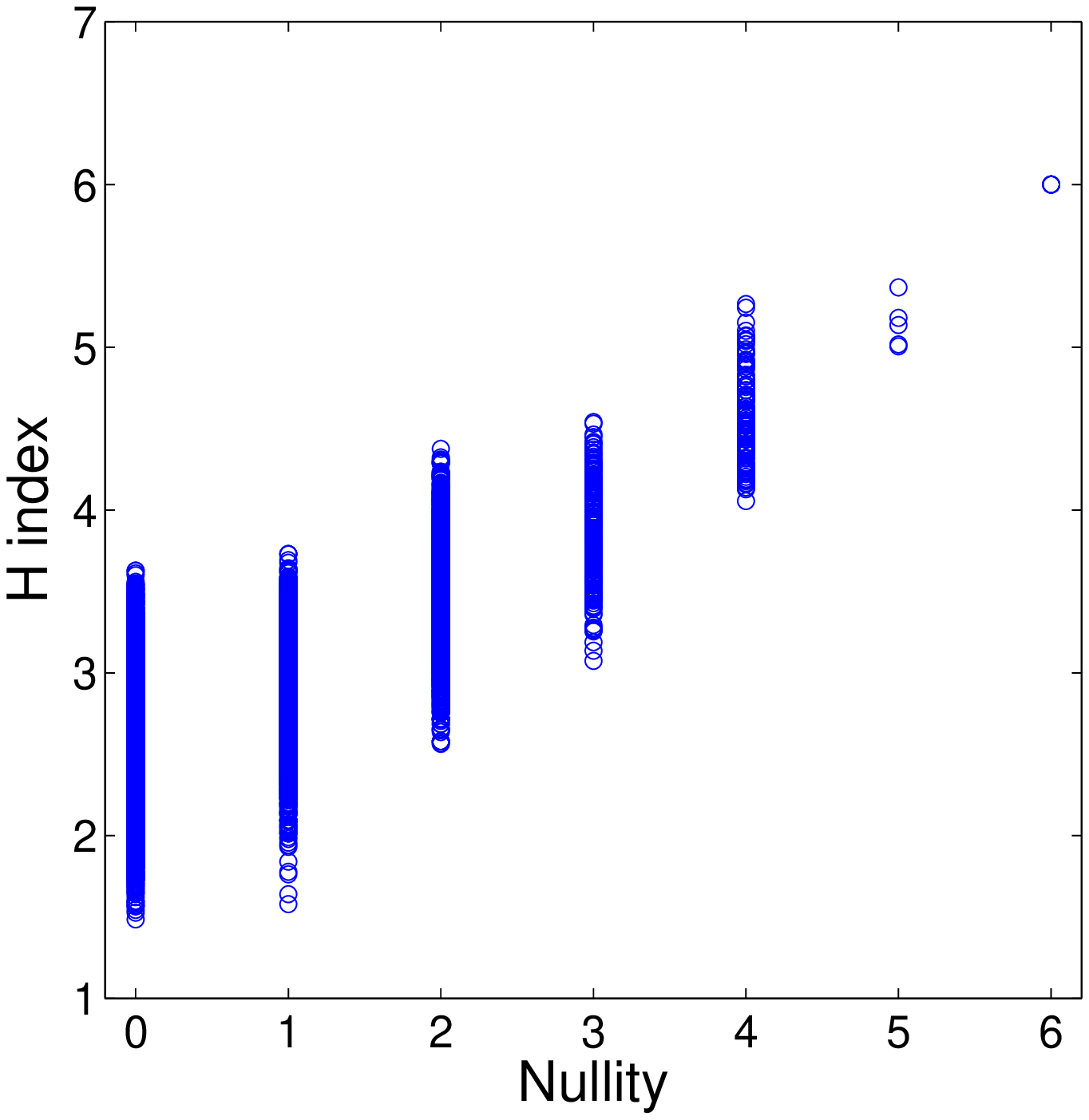}\quad{}(b)\includegraphics[width=0.25\textwidth]{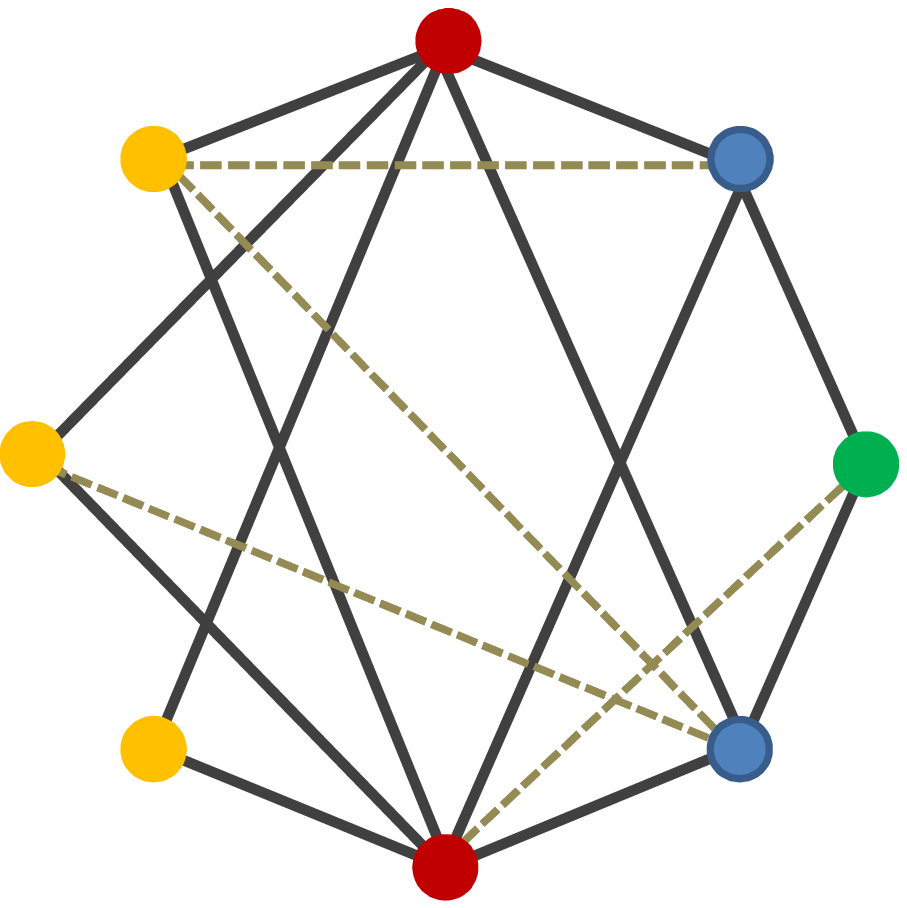}\quad{}(c)\includegraphics[width=0.25\textwidth]{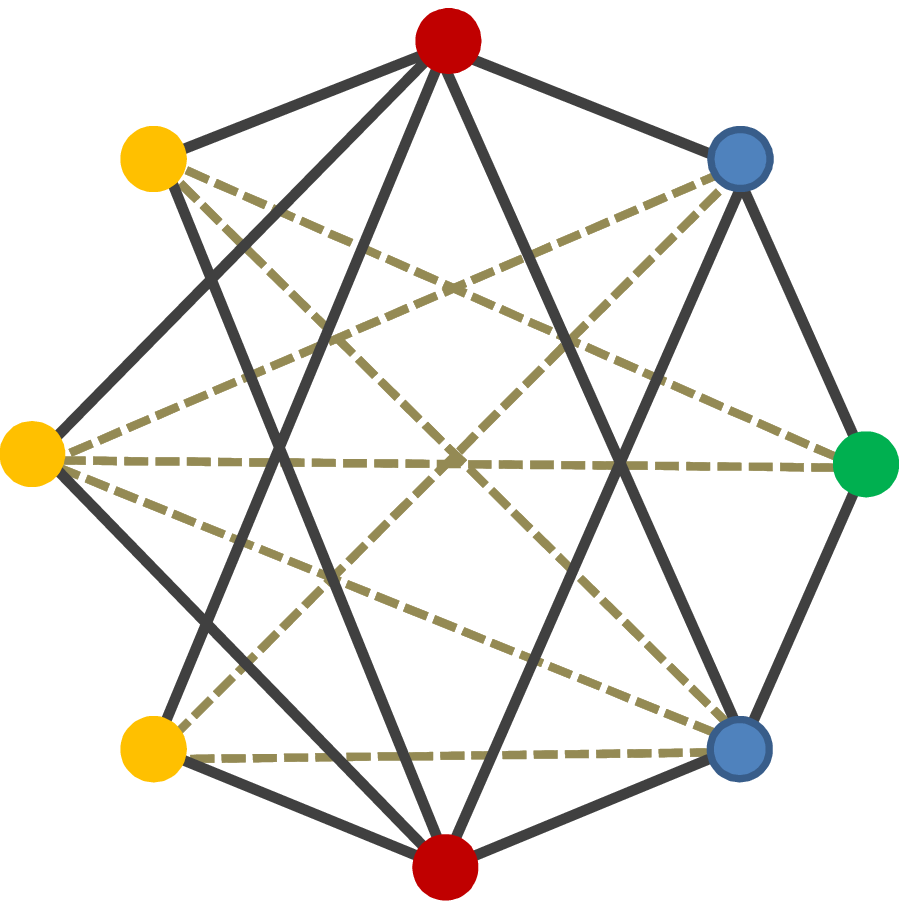}
\par\end{centering}

\protect\caption{(a) Plot of graph nullity versus $H$ index for all connected graphs
with 8 nodes. (b) Graph with the largest $H$ index among all the
connected graphs with 8 nodes and nullity zero. (c) The same as in
(b) for all connected graphs with nullity one. Notice in (b) and (c)
the connectivity pattern of the graphs in which every yellow node
is connected to every red ones, every red is connected to every blue
and every blue is connected to the green one. Also, there is no yellow-yellow,
red-red or blue-blue connections. }

\label{general correlations}
\end{figure}

\subsection{General Quadrature Rule-Based Bounds }

In this section we will use the Gaussian quadrature rule to obtain
an upper bound of $H$. We will mainly follow here the works \cite{quadrature rule 1,quadrature rule 2,quadrature rule 3}
to which the reader is directed for more. We start by recalling that
for a symmetric matrix $A$ and a smooth function $f$ defined on
an interval containing the eigenvalues of $A,\left[a,b\right]$ we
have:

\begin{equation}
I\left[f\right]=u^{T}f(A)v=\stackrel[b]{a}{\int}f(\lambda)d\mu(\lambda)\textnormal{ where }\mu(\lambda)=\begin{cases}
0, & \lambda\lneq a=\lambda_{1}\\
\stackrel[j=1]{i}{\sum}p_{j}q_{j,} & \lambda_{i}\leq\lambda\lneq\lambda_{i+1}\\
\stackrel[j=1]{N}{\sum}p_{j}q_{j,} & b=\lambda_{n}\leq\lambda.
\end{cases}
\end{equation}

Our motivation for using this definition is the fact that $\left[f(A)\right]_{ij}=e_{i}^{T}f(A)e_{j}$,
where $e_{i}$ is the $i$th column of the identity matrix. Moreover,
$u^{T}f(A)v=\stackrel[b]{a}{\int}f(\lambda)d\mu(\lambda)=\stackrel[j=1]{n}{\sum}w_{j}f(t_{j})+\stackrel[k=1]{M}{\sum}v_{k}f(z_{k})+R\left[f\right]$
which is the general Gauss-type quadrature rule where the nodes $\left\{ t_{j}\right\} _{j=1}^{n}$
and wights $\left\{ w_{j}\right\} _{j=1}^{n}$ are unknowns, whereas
the nodes $\left\{ z_{k}\right\} _{k=1}^{M}$ are prescribed . We
have:
\begin{itemize}
\item $M=0$ for the Gauss rule,
\item $M=1$, $z_{1}=a$ or $z_{1}=b$ for the Gauss-Radau rule,
\item $M=2$, $z_{1}=a$ and $z_{2}=b$ for the Gauss-Lobatto rule, which
we will focus on.
\end{itemize}
Let $J_{n}$ be a tridiagonal matrix defined as 

\[
J_{n}=\left[\begin{array}{ccccc}
\omega_{1} & \gamma_{1}\\
\gamma_{1} & \omega_{2} & \gamma_{2}\\
 & \ddots & \ddots & \ddots\\
 &  & \gamma_{n-2} & \omega_{n-1} & \gamma_{n-1}\\
 &  &  & \gamma_{n-1} & \omega_{n}
\end{array}\right],
\]
whose eigenvalues are the Gauss nodes, whereas the Gauss wights are
given by the square of the first entries of the normalized eigenvectors
of $J_{n}$, then,

\begin{equation}
\stackrel[l=1]{N}{\sum}w_{l}f(t_{l})=e_{1}^{T}f(J_{n})e_{1}.
\end{equation}

The entries of $J_{n}$ are computed using the symmetric Lanczos algorithm.
Now, if $f$ is a strictly completely monotonic function on an interval
$I=[a,b]$ containing the eigenvalues of the matrix $A$, i.e $f^{(2j)}(x)>0$
and $f^{(2j+1)}(x)<0$ on $I$ for all $j\geq0$ where $f^{(k)}$
denotes the $k$th derivative of $f$ and $f^{(0)}\equiv f$, the
symmetric Lanczos process can be used to compute bounds for the diagonal
entries $(f(A))_{ii}$. Let $J_{2}=\left[\begin{array}{cc}
\omega_{1} & \gamma_{1}\\
\gamma_{1} & \omega_{2}
\end{array}\right]$ be the Jacobian matrix obtained by taking a single Lanczos step,
then we only need to compute the $(1,1)$ entry of $f(J_{2})$. Now,
if $\varphi(x,y)=\frac{\omega_{1}\left(f(x)-f(y)\right)+xf(y)-yf(x)}{x-y}$,
then the Gauss-Lobatto rule gives the bound 
\[
(f(A))_{ii}\leq\varphi(a,b).
\]
See \cite{quadrature rule 1} for more details.

The main result of this section is the following.
\begin{thm}
Let $G$ be a graph with $n$ nodes and $m$ edges and let $H=tr\exp\left(-A^{2}\right)$.
Then,
\end{thm}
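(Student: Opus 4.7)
The plan is to apply the Gauss--Lobatto upper bound $(f(A))_{ii}\le\varphi(a,b)$ stated just before the theorem to each diagonal entry of $\exp(-A^{2})$ and then sum. The right way to set this up is to regard $\exp(-A^{2})$ as $g(A^{2})$ with $g(x)=e^{-x}$, since the quadrature bound requires a completely monotonic $g$. The derivatives $g^{(k)}(x)=(-1)^{k}e^{-x}$ alternate in sign, so $g$ is strictly completely monotonic on $[0,\infty)$, and the hypothesis is satisfied.

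First I would identify the spectral interval for $A^{2}$. Because $A$ is real symmetric with $\lambda_{1}$ its Perron eigenvalue (so $\|A\|_{2}=\lambda_{1}$), the eigenvalues of $A^{2}$ lie in $[0,\lambda_{1}^{2}]$; I therefore take $a=0$ and $b=\lambda_{1}^{2}$. Next I compute the one-step Lanczos coefficient produced by running Lanczos on $A^{2}$ with starting vector $e_{i}$:
\begin{equation*}
\omega_{1}=e_{i}^{T}A^{2}e_{i}=(A^{2})_{ii}=k_{i},
\end{equation*}
the last equality because $(A^{2})_{ii}$ counts closed walks of length two at vertex $i$, which equals the degree $k_{i}$.

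Substituting $\omega_{1}=k_{i}$, $x=0$, $y=\lambda_{1}^{2}$, and $f=g$ into the formula for $\varphi$ gives
\begin{equation*}
\varphi(0,\lambda_{1}^{2})=\frac{k_{i}\bigl(1-e^{-\lambda_{1}^{2}}\bigr)-\lambda_{1}^{2}}{-\lambda_{1}^{2}}=1-\frac{k_{i}\bigl(1-e^{-\lambda_{1}^{2}}\bigr)}{\lambda_{1}^{2}},
\end{equation*}
so the Gauss--Lobatto bound yields $(\exp(-A^{2}))_{ii}\le 1-k_{i}(1-e^{-\lambda_{1}^{2}})/\lambda_{1}^{2}$. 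Summing over $i$ and invoking the handshake identity $\sum_{i}k_{i}=2m$ produces the expected inequality
\begin{equation*}
H=\sum_{i=1}^{n}(\exp(-A^{2}))_{ii}\le n-\frac{2m\bigl(1-e^{-\lambda_{1}^{2}}\bigr)}{\lambda_{1}^{2}}.
\end{equation*}

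The main point that needs careful verification is that the $\omega_{1}$ appearing in $\varphi$ is the first Lanczos coefficient $(J_{2})_{1,1}$ obtained when Lanczos is applied to $A^{2}$ (not to $A$) with starting vector $e_{i}$; once that identification is made, reading off $\omega_{1}=(A^{2})_{ii}=k_{i}$ is immediate. Everything else is an algebraic simplification of $\varphi(0,\lambda_{1}^{2})$ plus the degree sum, so no further spectral or combinatorial input is required.
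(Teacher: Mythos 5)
Your proposal is correct and follows essentially the same route as the paper: apply the one-step Gauss--Lobatto bound to $g(A^{2})$ with $g(x)=e^{-x}$, identify $\omega_{1}=(A^{2})_{ii}=k_{i}$, and sum using the handshake lemma. The only cosmetic difference is that you fix $b=\lambda_{1}^{2}$ while the paper leaves $b$ as any right endpoint of an interval containing the spectrum of $A^{2}$ (which it later instantiates as $n-1$ or $(n-1)^{2}$); since $\frac{e^{-b}-1}{b}=-\frac{1-e^{-b}}{b}$, your final inequality is the same bound.
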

\begin{equation}
H\left(G\right)\leq\stackrel[i=1]{n}{\sum}\left[\frac{d_{i}(e^{-b}-1)}{b}+1\right]=2m\frac{(e^{-b}-1)}{b}+n.
\end{equation}

\begin{proof}
In the case of $J_{2}$ we have:

\[
J_{2}=\left[\begin{array}{cc}
\omega_{1} & \gamma_{1}\\
\gamma_{1} & \omega_{2}
\end{array}\right],\omega_{1}=a_{ii},\gamma_{1}^{2}=\underset{i\neq j}{\sum}a_{ij}^{2},\omega_{2}=\frac{1}{\gamma_{1}^{2}}\underset{k\neq i}{\sum}\underset{l\neq i}{\sum}a_{ki}a_{kl}a_{li}
\]
Now, if $B=A^{2}$, where $A$ is the adjacency matrix of a graph
$G$ and $f(x)=e^{-x}$ we have

\[
J_{2}=\left[\begin{array}{cc}
d_{i} & \sqrt{\underset{i\neq j}{\sum}b_{ij}^{2}}\\
\sqrt{\underset{i\neq j}{\sum}b_{ij}^{2}} & \frac{1}{\sqrt{\underset{i\neq j}{\sum}b_{ij}^{2}}}\underset{k\neq i}{\sum}\underset{l\neq i}{\sum}b_{ki}b_{kl}b_{li}
\end{array}\right]
\]

where $d_{i}$ is the degree of the node $i$ and $b_{ij}$ is the
$(i,j)$th entry of $A^{2}$. Notice that $b_{ij}=\stackrel[k=1]{n}{\sum}a_{ik}a_{kj}$
and $\left[a,b\right]=\left[0,b\right]$ since $A^{2}$ has nonnegative
eigenvalues.

Hence, we have for the Gauss-Lobatto rule
\[
(e^{-A^{2}})_{ii}\leq\frac{d_{i}(1-e^{-b})-b}{-b}=\frac{d_{i}(e^{-b}-1)}{b}+1,
\]

To find the bound of the trace of $e^{-A^{2}}$ we take the summation
from $1$ to $n$ on the previous inequality, which by the Handshaking
Lemma gives the final result.
\end{proof}

\section{H Index of Graphs}

\subsection{Elementary properties}

In the following we show some results about $\tilde{G}_{pq}$ of some
elementary graphs which will help us to interpret this measure when
applied to more complex structures. In particular, we study the $n$-nodes
path $P_{n}$, the $n$-nodes cycle $C_{n}$, the star graph $K_{1,n-1}$,
the complete graph $K_{n}$ of $n$ nodes and the complete bipartite
graph $K_{n_{1},n_{2}}$ of $n_{1}+n_{2}$ nodes. $P_{n}$ is a connected
graph in which $n-2$ nodes are connected to other two nodes and two
nodes are connected to only one node; $C_{n}$ is the connected graph
of $n$ nodes in which every node is connected to two others; $K_{1,n-1}$is
the connected graph in which there is one node connected to $n-1$
nodes, here labeled as $1$ and named the central node, and $n-1$
nodes are connected to the central one only; $K_{n}$ is the graph
in which every pair of nodes is connected by an edge; and $K_{n_{1},n_{2}}$
is the connected graph which is formed by two sets $V_{1}$ and $V_{2}$
of nodes of cardinalities $n_{1}$ and $n_{2}$, respectively, such
that every node in $V_{1}$ is connected to every node in $V_{2}$.
Here we give expressions for the $H\left(G\right)$ index of the before
mentioned graphs in the form of Lemmas.
\begin{lem}
Let $K_{n}$ be the complete graph of $n$ nodes. Then 
\begin{equation}
H\left(K_{n}\right)=e^{-(n-1)^{2}}+\frac{n-1}{e}.
\end{equation}

\label{a}\end{lem}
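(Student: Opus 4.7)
The plan is to compute $H(K_n)$ directly from the spectral formula $H = \sum_{j=1}^n \exp(-\lambda_j^2)$ given in the Preliminaries, which reduces the problem to knowing the spectrum of the adjacency matrix of $K_n$.

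First I would recall (or derive in one line) the spectrum of $K_n$. Writing $A(K_n) = J - I$, where $J$ is the all-ones matrix of size $n$, one sees that $J$ has the single eigenvalue $n$ with eigenvector $(1,\dots,1)^T$ and the eigenvalue $0$ with multiplicity $n-1$ (whose eigenspace is the orthogonal complement of the all-ones vector). Subtracting $I$ shifts all eigenvalues by $-1$, so the spectrum of $A(K_n)$ consists of $\lambda_1 = n-1$ (simple) together with $\lambda_2 = \cdots = \lambda_n = -1$ of multiplicity $n-1$.

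Second, I would substitute this spectrum into the trace formula:
\begin{equation*}
H(K_n) \;=\; \sum_{j=1}^n \exp(-\lambda_j^2) \;=\; \exp(-(n-1)^2) \;+\; (n-1)\exp(-(-1)^2) \;=\; e^{-(n-1)^2} + \frac{n-1}{e},
\end{equation*}
which is the claimed identity.

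There is really no obstacle here: the argument is a one-line spectral computation once the eigenvalues of $K_n$ are recalled. The only thing worth double-checking is that the eigenvalue $-1$ is indeed counted with full multiplicity $n-1$, but this follows immediately from $\operatorname{rank}(J)=1$ so that $\dim\ker(J) = n-1$.
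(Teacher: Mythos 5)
Your proof is correct and rests on the same essential fact as the paper's: the spectrum of $K_n$ is $\{(n-1)^{1},(-1)^{n-1}\}$, after which the identity follows from $H=\sum_j e^{-\lambda_j^2}$. The difference is one of economy. You observe that the trace of a matrix function depends only on the eigenvalues, so you never touch the eigenvectors; your derivation of the spectrum via $A(K_n)=J-I$ and $\operatorname{rank}(J)=1$ is clean and complete. The paper instead works entrywise: it uses the orthonormality of the eigenvector matrix to evaluate $\sum_{j\ge 2}\varphi_j(p)\varphi_j(q)$, computes the Gaussian communicability $\tilde G_{pq}(K_n)$ and the diagonal entries $\tilde G_{pp}(K_n)=\frac{e^{-(n-1)^2}}{n}+\frac{n-1}{ne}$ explicitly, and only then sums the diagonal to obtain $H$. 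That detour is redundant for the lemma as stated, but it buys the paper the per-node and per-pair quantities $\tilde G_{pp}$ and $\tilde G_{pq}$, which the surrounding section is explicitly cataloguing for the elementary graphs. For the trace alone, your shorter route is the right one.
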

\begin{proof}
The spectrum of $K_{n}$ is $\sigma(K_{n})=\left\{ \left[n-1\right]^{1},\left[-1\right]^{n-1}\right\} $
with the eigenvector $\varphi_{1}=\frac{1}{\sqrt{n}}(1,1,\ldots,1)$
so we have
\begin{equation}
\tilde{G}_{pq}\left(K_{n}\right)=\varphi_{1}(p)\varphi_{1}(q)e^{-(n-1)^{2}}+\stackrel[j=2]{n}{\sum}\varphi_{j}(p)\varphi_{j}(q)e^{-1}
\end{equation}

And since the eigenvector matrix has orthonormal rows and columns
we have $\stackrel[j=2]{n}{\sum}\varphi_{j}(p)\varphi_{j}(q)=-\frac{1}{n}$
if $p\ne q$ and $\frac{n-1}{n}$ if $p=q$.
\begin{equation}
\tilde{G}_{pq}\left(K_{n}\right)=\frac{e^{-(n-1)^{2}}}{n}-\frac{1}{ne}
\end{equation}

Now, if $p=q$ then $\tilde{G}_{pp}\left(K_{n}\right)=\varphi_{1}^{2}(p)e^{-(n-1)^{2}}+\stackrel[j=2]{n}{\sum}\varphi_{j}^{2}(p)e^{-1}=\frac{e^{-(n-1)^{2}}}{n}+\frac{n-1}{ne}$.

Then, it is straightforward to realize that
\begin{eqnarray}
H & \left(K_{n}\right)= & \stackrel[j=1]{n}{\sum}(\frac{e^{-(n-1)^{2}}}{n}+\frac{n-1}{ne})\\
 & = & e^{-(n-1)^{2}}+\frac{n-1}{e}
\end{eqnarray}

\end{proof}

\begin{lem}
Let $P_{n}$ be a path having $n$ nodes. Then, asymptotically as
$n\rightarrow\infty$ 
\begin{equation}
H\left(P_{n}\right)=\frac{I_{0}(2)}{e^{2}}(n+1)-e^{-4}.
\end{equation}
\end{lem}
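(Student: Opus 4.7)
The plan is to exploit the explicit spectrum of the path and recognise the resulting trigonometric sum as a trapezoidal-rule approximation to a Bessel integral. First, using the standard fact that the adjacency eigenvalues of $P_n$ are $\lambda_k = 2\cos(k\pi/(n+1))$ for $k=1,\ldots,n$, the spectral formula for $H$ gives $H(P_n) = \sum_{k=1}^n \exp\bigl(-4\cos^2(k\pi/(n+1))\bigr)$. I will view this as a sum of $f(\theta):=\exp(-4\cos^2\theta)$ over the $n$ interior nodes of the uniform partition of $[0,\pi]$ with step $h=\pi/(n+1)$. The two features I will exploit are that $f$ is $\pi$-periodic and real-analytic, and that $f(0)=f(\pi)=e^{-4}$.

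Next I will apply the composite trapezoidal rule on $[0,\pi]$ with step $h$. Since $f$ is smooth and $\pi$-periodic, the trapezoidal quadrature error decays exponentially in $n$, so that $\frac{\pi}{n+1}\bigl[\tfrac12 f(0)+\sum_{k=1}^n f(kh)+\tfrac12 f(\pi)\bigr] = \int_0^\pi f(\theta)\,d\theta + O(r^n)$ for some $r\in(0,1)$. Solving for the interior sum and substituting $f(0)=f(\pi)=e^{-4}$ yields $H(P_n) = \tfrac{n+1}{\pi}\int_0^\pi f(\theta)\,d\theta - e^{-4} + O(r^n)$, which already carries both the linear term and the desired additive constant.

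The remaining step is to identify $\int_0^\pi f(\theta)\,d\theta$ with $\pi e^{-2} I_0(2)$. Writing $\cos^2\theta=(1+\cos 2\theta)/2$ pulls out a factor $e^{-2}$, and the substitution $\phi=2\theta$ reduces the integral to $\tfrac12\int_0^{2\pi}\exp(-2\cos\phi)\,d\phi$. Splitting the domain at $\phi=\pi$ and using $\cos(\phi+\pi)=-\cos\phi$, together with the Bessel integral representation stated in Section~II at $\gamma=0$, gives $\int_0^{2\pi}\exp(-2\cos\phi)\,d\phi = \pi I_0(2)+\pi I_0(-2) = 2\pi I_0(2)$. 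Substituting back produces the claimed asymptotic $H(P_n) = (n+1)I_0(2)/e^2 - e^{-4}$ up to exponentially small error.

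The delicate step is recovering the constant $-e^{-4}$: a naive Riemann-sum argument captures only the leading linear term and leaves an $O(1)$ residual. The key observation is that the interior-node sum differs from the composite trapezoidal rule precisely by the half-endpoint contribution $\tfrac12[f(0)+f(\pi)] = e^{-4}$, and then one uses $\pi$-periodicity of $f$ so that all higher Euler--Maclaurin boundary terms vanish. Without this bookkeeping the constant $-e^{-4}$ cannot be isolated, so that is where I expect the real work to sit.
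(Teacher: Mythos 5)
Your proposal is correct, and it reaches the result by a genuinely different route than the paper. The paper works at the level of the diagonal entries: it substitutes the eigenvectors of $P_n$ into $\tilde{G}_{pp}$, passes to the limit to obtain $\tilde{G}_{pp}=e^{-2}\left(I_{0}(2)-(-1)^{p}I_{p}(2)\right)$ for each node $p$, and then sums over $p$ using the alternating-series identity $\sum_{\gamma\geq1}(-1)^{\gamma}I_{\gamma}(x)=\tfrac{1}{2}\left(e^{-x}-I_{0}(x)\right)$, treating even and odd $n$ separately; the constant $-e^{-4}$ emerges from that Bessel-sum identity. You instead work directly with the trace $\sum_{k=1}^{n}\exp\bigl(-4\cos^{2}(k\pi/(n+1))\bigr)$, never touching the eigenvectors, and recover the constant from the missing half-endpoint contributions $\tfrac{1}{2}[f(0)+f(\pi)]=e^{-4}$ of the composite trapezoidal rule, with the $\pi$-periodicity and analyticity of $f(\theta)=e^{-4\cos^{2}\theta}$ guaranteeing that all higher Euler--Maclaurin corrections vanish (indeed the error is exponentially small). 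Your identification $\int_{0}^{\pi}e^{-4\cos^{2}\theta}\,d\theta=\pi e^{-2}I_{0}(2)$ is exactly the $\gamma=0$ case of the integral representation in Section II together with $I_{0}(-2)=I_{0}(2)$. What your approach buys is a cleaner and arguably more rigorous derivation: it avoids the eigenvector bookkeeping, the alternating Bessel sum, and the separate even/odd case analysis (where the paper's index ranges are somewhat loosely handled), and it comes with an explicit, quantified error term rather than a formal interchange of limits. What the paper's approach buys is the intermediate formula for the individual Gaussian subgraph centralities $\tilde{G}_{pp}(P_n)$ in terms of $I_{p}(2)$, which is of independent interest and is not produced by your trace-level argument.
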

\begin{proof}
By substituting the eigenvalues and eigenvectors of the path graph
into the expression for $\tilde{G}_{pp}\left(P_{n}\right)$ we obtain
\begin{eqnarray}
\tilde{G}_{pp}\left(P_{n}\right) & = & \frac{2}{n+1}\stackrel[j=1]{n}{\sum}\sin^{2}\left(\frac{j\pi p}{n+1}\right)\exp\left(-4\cos^{2}\left(\frac{j\pi}{n+1}\right)\right)\\
 & = & \frac{e^{-2}}{n+1}\stackrel[j=1]{n}{\sum}\left[1-\cos\left(\frac{2j\pi p}{n+1}\right)\right]\exp\left(-2\cos\left(\frac{2j\pi}{n+1}\right)\right).\label{eqb}
\end{eqnarray}
 Now, when $n\rightarrow\infty$ the summation in \ref{eqb} can be
approached by the following integral 
\begin{equation}
\tilde{G}_{pp}\left(P_{n}\right)=\frac{e^{-2}}{\pi}\int_{0}^{\pi}\exp(-2\cos\theta)d\theta-\frac{e^{-2}}{\pi}\int_{0}^{\pi}\cos\left(p\theta\right)\exp(-2\cos\theta)d\theta,
\end{equation}
 where $\theta=\frac{2j\pi}{n+1}$. Thus, when $n\rightarrow\infty$
we have 
\begin{equation}
\tilde{G}_{pp}\left(P_{n}\right)=e^{-2}\left(I_{0}(-2)-I_{p}(-2)\right)
\end{equation}
 which by using $I_{\gamma}(-x)=(-1)^{\gamma}I_{\gamma}(x)$ gives
\begin{eqnarray*}
\tilde{G}_{pp} & \left(P_{n}\right)= & e^{-2}\left(I_{0}(2)-(-1)^{p}I_{p}(2)\right).
\end{eqnarray*}

Let $n$ be even. Then due to the symmetry of the path we have 
\begin{eqnarray}
H\left(P_{n}\right) & = & 2\stackrel[p=1]{n/2}{\sum}\tilde{G}_{pp}\left(P_{n}\right)=2\stackrel[p=1]{n/2}{\sum}e^{-2}\left[I_{0}(2)-(-1)^{p}I_{p}(2)\right]\\
 & = & \frac{nI_{0}(2)}{e^{2}}-\frac{2}{e^{2}}\stackrel[p=1]{n/2}{\sum}(-1)^{p}I_{p}(2).
\end{eqnarray}
 For $n\rightarrow\infty$ we have 
\begin{equation}
\stackrel[\gamma=1]{\infty}{\sum}(-1)^{\gamma}I_{\gamma}(x)=\frac{1}{2}\left(e^{-x}-I_{0}(x)\right).
\end{equation}
 Then, we can write for $n\rightarrow\infty$
\begin{eqnarray}
H\left(P_{n}\right) & = & \frac{nI_{0}(2)}{e^{2}}-\frac{1}{e^{2}}\left(e^{-2}-I_{0}(2)\right)\\
 & = & \frac{I_{0}(2)}{e^{2}}(n+1)-e^{-4}.
\end{eqnarray}
Now, when $n$ is odd we can split the path into two paths of lengths
$\frac{n+1}{2}$ and $\frac{n-1}{2}$, respectively. Then, we write
\begin{eqnarray}
H\left(P_{n}\right) & = & \stackrel[p=1]{\frac{n+1}{2}}{\sum}\tilde{G}_{pp}\left(P_{n}\right)+\stackrel[p=\frac{n-1}{2}]{n}{\sum}\tilde{G}_{pp}\left(P_{n}\right)\\
 & = & \frac{(n+1)I_{0}(2)}{2e^{2}}-\frac{1}{e^{2}}\stackrel[p=1]{\frac{n+1}{2}}{\sum}(-1)^{p}I_{p}(2)+\frac{(n-1)I_{0}(2)}{2e^{2}}-\frac{1}{e^{2}}\stackrel[p=\frac{n-1}{2}]{n}{\sum}(-1)^{p}I_{p}(2).\label{eqc}
\end{eqnarray}
When $n\rightarrow\infty$ we can consider that the summation in the
second and fourth terms of \ref{eqc} are both equal to $\left(e^{-2}-I_{0}(2)\right)/2$,
which then gives the final result.
\end{proof}

\begin{lem}
Let $C_{n}$ be a cycle having $n$ nodes. Then, asymptotically as
$n\rightarrow\infty$
\begin{equation}
H(C_{n})=\frac{nI_{0}(-2)}{e^{2}}.
\end{equation}
\end{lem}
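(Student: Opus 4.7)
The plan is to mirror the strategy used for $P_n$, but now on the cycle whose spectral data is cleaner because $C_n$ is vertex-transitive, so $\tilde G_{pp}$ is the same for every vertex and $H(C_n)=n\tilde G_{11}(C_n)$.

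First I would write down the well-known spectrum of the cycle, $\lambda_j = 2\cos(2\pi j/n)$ for $j=0,1,\ldots,n-1$, and plug it into the spectral formula for $H$ obtained earlier, giving
\begin{equation*}
H(C_n)=\sum_{j=0}^{n-1}\exp\!\bigl(-4\cos^{2}(2\pi j/n)\bigr).
\end{equation*}
The factor $4\cos^2$ is the awkward piece, so next I would apply the identity $2\cos^{2}x=1+\cos 2x$ to pull out the overall $e^{-2}$ and linearize the cosine in the exponent:
\begin{equation*}
H(C_n)=e^{-2}\sum_{j=0}^{n-1}\exp\!\bigl(-2\cos(4\pi j/n)\bigr).
\end{equation*}

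Second, I would recognize the remaining sum as a Riemann sum for a $2\pi$-periodic integrand. Because $\phi\mapsto\exp(-2\cos\phi)$ has period $2\pi$ and the nodes $4\pi j/n$ traverse the circle twice at uniform spacing $4\pi/n$, the sum equals $\tfrac{n}{2\pi}\int_{0}^{2\pi}\exp(-2\cos\phi)\,d\phi+o(n)$ as $n\to\infty$. Exploiting the symmetry $\cos(2\pi-\phi)=\cos\phi$ reduces the integral to $\tfrac{n}{\pi}\int_{0}^{\pi}\exp(-2\cos\phi)\,d\phi$, which by the definition of the modified Bessel function of the first kind recalled in equation~(1) of the Preliminaries (with $\gamma=0$, $x=-2$, so the second term vanishes) is precisely $n\,I_{0}(-2)$.

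Putting the pieces together yields $H(C_n)\sim e^{-2}\cdot n\,I_{0}(-2) = nI_{0}(-2)/e^{2}$ as claimed. The only real obstacle is the Riemann-sum step: strictly speaking one should check that the quadrature error is $o(n)$, which follows immediately because the integrand is smooth and $2\pi$-periodic (in fact the trapezoidal/midpoint rule on a full period converges super-algebraically for analytic integrands), so the subleading term is negligible compared to the $O(n)$ main term. Everything else is a direct substitution, so the proof should be short and parallel in spirit to the preceding lemma for $P_n$, but in fact easier because no boundary terms arise from the endpoints.
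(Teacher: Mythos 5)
Your proposal is correct and follows essentially the same route as the paper: vertex-transitivity (circulant adjacency matrix), the spectrum $2\cos(2\pi j/n)$, the identity $2\cos^{2}x=1+\cos 2x$ to extract $e^{-2}$, and passage from the Riemann sum to $\frac{1}{\pi}\int_{0}^{\pi}e^{-2\cos\theta}\,d\theta=I_{0}(-2)$. Your treatment is in fact slightly more careful than the paper's, since you explicitly account for the nodes wrapping twice around the period and for the quadrature error being $o(n)$.
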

\begin{proof}
(\textbf{Lemma 3}): Notice that the adjacency matrix of a cycle is
a circulant matrix and consequently any function of it and that gives
\begin{eqnarray}
H(C_{n}) & = & \stackrel[j=1]{n}{\sum}\tilde{G}_{pp},\,for\,any\,node\,p\\
 & = & n\left(\frac{tr(e^{-A^{2}})}{n}\right)\\
 & = & n\left(\frac{1}{n}\stackrel[j=1]{n}{\sum}e^{-4\cos^{2}(\frac{2\pi j}{n})}\right)\label{eq:c}\\
 & = & ne^{-2}\left(\stackrel[j=1]{n}{\sum}\frac{1}{n}e^{-2\cos\frac{4\pi j}{n}}\right)\label{eqd}
\end{eqnarray}
Now, when $n\rightarrow\infty$ the summation in \ref{eqd} can be
approached by the following integral
\begin{equation}
H(C_{n})=ne^{-2}\frac{1}{\pi}\int_{0}^{\pi}e^{-2cos\theta}d\theta
\end{equation}
where $\theta=\frac{2j\pi}{n}$. Thus, when $n\rightarrow\infty$
we have
\begin{equation}
H(C_{n})=ne^{-2}I_{0}(-2).
\end{equation}

\end{proof}

\begin{lem}
Let $K_{n_{1},n_{2}}$ be the complete bipartite graph of $n_{1}+n_{2}$
nodes. Then
\begin{equation}
H\left(K_{n_{1},n_{2}}\right)=2e^{-n_{1}n_{2}}+n_{1}+n_{2}-2.
\end{equation}

\end{lem}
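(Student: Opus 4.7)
The plan is to apply the spectral formula (6), namely $H(G) = \sum_{j=1}^{n} \exp(-\lambda_j^2)$, directly to the known spectrum of $K_{n_1,n_2}$. So the entire problem reduces to identifying the eigenvalues of the adjacency matrix of a complete bipartite graph.

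First I would recall (or derive) the spectrum of $K_{n_1,n_2}$. Writing the adjacency matrix in block form
\begin{equation*}
A = \begin{pmatrix} 0 & J \\ J^{T} & 0 \end{pmatrix},
\end{equation*}
where $J$ is the $n_1 \times n_2$ all-ones matrix, one computes
\begin{equation*}
A^2 = \begin{pmatrix} JJ^{T} & 0 \\ 0 & J^{T}J \end{pmatrix},
\end{equation*}
whose nonzero eigenvalues coincide and equal $n_1 n_2$ (each with multiplicity $1$), while $A^2$ has a zero eigenvalue of multiplicity $n_1 + n_2 - 2$. Taking square roots with appropriate signs yields the spectrum of $A$:
\begin{equation*}
\sigma(K_{n_1,n_2}) = \bigl\{ [\sqrt{n_1 n_2}]^{1},\; [0]^{n_1+n_2-2},\; [-\sqrt{n_1 n_2}]^{1} \bigr\}.
\end{equation*}

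Substituting into (6) gives
\begin{equation*}
H(K_{n_1,n_2}) = e^{-n_1 n_2} + (n_1+n_2-2)\,e^{0} + e^{-n_1 n_2} = 2e^{-n_1 n_2} + n_1 + n_2 - 2,
\end{equation*}
which is the claimed identity. There is no genuine obstacle here: the only thing to justify carefully is the multiplicity of the zero eigenvalue, which follows from the fact that $A$ has rank $2$ (both row blocks of $A$ span a space of dimension $1$ each, so $\mathrm{rank}(A)=2$ and hence $\eta(A)=n_1+n_2-2$). Unlike Lemmas 2 and 3 for $P_n$ and $C_n$, no asymptotic analysis or Bessel-function identity is needed, since $K_{n_1,n_2}$ has only three distinct eigenvalues and the sum in (6) collapses to three terms exactly.
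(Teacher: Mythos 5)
Your argument is correct: the spectrum of $K_{n_1,n_2}$ is indeed $\{[\sqrt{n_1n_2}]^1,[0]^{n_1+n_2-2},[-\sqrt{n_1n_2}]^1\}$, and plugging it into the trace formula $H=\sum_j\exp(-\lambda_j^2)$ immediately gives $2e^{-n_1n_2}+n_1+n_2-2$. Your route differs from the paper's in a small but real way. The paper does not invoke the trace formula directly; instead it computes the individual Gaussian subgraph centralities $\tilde{G}_{pp}$ via the eigenvector expansion, using the orthonormality relation $\sum_{j=2}^{n_1+n_2-1}[\varphi_j(p)]^2 = 1-\tfrac{1}{n_1}$ for $p\in V_1$ (and the analogue for $V_2$) together with the explicit components of the two Perron-type eigenvectors, arriving at $\tilde{G}_{pp}=\tfrac{1}{n_1}(e^{-n_1n_2}-1)+1$ for $p\in V_1$, and only then sums over the two vertex classes. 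That longer computation buys the local quantities $\tilde{G}_{pp}$ as a byproduct, which fits the paper's broader interest in node-level Gaussian centralities; your computation buys brevity and robustness, since you need only the eigenvalue multiplicities (indeed, only the spectrum of $A^2$, so the sign assignment $\pm\sqrt{n_1n_2}$ is not even essential) and your rank-two argument pins down the nullity cleanly. Both are valid; yours is the more economical proof of the stated trace identity.
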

The following corollary will be of importance in the following section
of this work.
\begin{proof}
(\textbf{Lemma 4}): From the orthonormality of the eigenvectors of
the adjacency matrix we have:
\begin{equation}
\stackrel[j=2]{n_{1}+n_{2}-1}{\sum}\left[\varphi_{j}\left(p\right)\right]^{2}=1-\frac{1}{n_{1}},\,p\in V_{1}
\end{equation}
\begin{equation}
\stackrel[j=2]{n_{1}+n_{2}-1}{\sum}\left[\varphi_{j}\left(p\right)\right]^{2}=1-\frac{1}{n_{2}},\,p\in V_{2}
\end{equation}

Hence, if $p\in V_{1}$
\begin{eqnarray}
\tilde{G}_{pp}\left(K_{n_{1},n_{2}}\right) & = & \stackrel[j=1]{n_{1}+n_{2}}{\sum}\left[\varphi_{j}\left(p\right)\right]^{2}\exp(-\lambda_{j}^{2})\\
 & = & e^{-n_{1}n_{2}}(\frac{n_{1}n_{2}}{2n_{1}n_{2}^{2}}+\frac{n_{1}n_{2}}{2n_{2}n_{1}^{2}})+\stackrel[j=2]{n_{1}+n_{2}-1}{\sum}\left[\varphi_{j}\left(p\right)\right]^{2}\\
 & = & e^{-n_{1}n_{2}}(\frac{1}{n_{1}})+1-\frac{1}{n_{1}}=\frac{1}{n_{1}}(e^{-n_{1}n_{2}}-1)+1,
\end{eqnarray}

and similarly we have $\tilde{G}_{pp}\left(K_{n_{1},n_{2}}\right)=\frac{1}{n_{2}}(e^{-n_{1}n_{2}}-1)+1$
when $p\in V_{2}$. Then
\begin{eqnarray}
H & \left(K_{n_{1},n_{2}}\right)= & \stackrel[j=1]{n_{1}+n_{2}}{\sum}\tilde{G}_{pp}\\
 & = & \stackrel[j=1]{n_{1}}{\sum}\tilde{G}_{pp}+\stackrel[j=m+1]{n_{1}+n_{2}}{\sum}\tilde{G}_{pp}\\
 & = & n_{1}(\frac{1}{n_{1}}(e^{-n_{1}n_{2}}-1)+1)+n_{2}(\frac{1}{n_{2}}(e^{-n_{1}n_{2}}-1)+1)\\
 & = & 2e^{-n_{1}n_{2}}+n_{1}+n_{2}-2.
\end{eqnarray}

\end{proof}

\begin{cor}
Let $K_{1,n-1}$ be the star graph of $n$ nodes. Then
\begin{equation}
H\left(K_{1,n-1}\right)=2e^{1-n}+n-2.
\end{equation}

\end{cor}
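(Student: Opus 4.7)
The plan is to treat this as a direct specialization of Lemma 4 rather than compute anything from scratch. The star graph $K_{1,n-1}$ is, by definition, the complete bipartite graph with parts of sizes $1$ and $n-1$: a single central node adjacent to $n-1$ leaves, with no other edges. So I would simply identify $K_{1,n-1} = K_{n_1,n_2}$ with $n_1 = 1$ and $n_2 = n-1$ and invoke the formula already proved.

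First I would state the identification of $K_{1,n-1}$ as a complete bipartite graph, making sure the bipartition $V_1 = \{\text{center}\}$, $V_2 = \{\text{leaves}\}$ satisfies the hypotheses of Lemma 4 (every vertex in $V_1$ is joined to every vertex in $V_2$, with $|V_1| = 1$, $|V_2| = n-1$). Then I would substitute $n_1 = 1$ and $n_2 = n-1$ into
\begin{equation}
H(K_{n_1,n_2}) = 2e^{-n_1 n_2} + n_1 + n_2 - 2,
\end{equation}
obtaining $n_1 n_2 = n-1$ and $n_1 + n_2 - 2 = n - 2$, hence
\begin{equation}
H(K_{1,n-1}) = 2e^{-(n-1)} + n - 2 = 2e^{1-n} + n - 2,
\end{equation}
which is the claimed expression.

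There is effectively no obstacle here; the only thing worth checking is that Lemma 4 was stated without an implicit restriction $n_1, n_2 \geq 2$ (the proof uses the orthonormality relations $\sum_{j \geq 2}[\varphi_j(p)]^2 = 1 - 1/n_i$ which remain valid for $n_1 = 1$, since then the corresponding sum is simply $0$ and the central vertex contributes only through the two extremal eigenvectors associated with $\pm\sqrt{n-1}$). Since this holds, the corollary follows by substitution, and no independent spectral computation is required.
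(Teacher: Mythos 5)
Your proposal is correct and matches the paper's intent exactly: the statement is presented as a corollary of Lemma 4, obtained by substituting $n_1=1$, $n_2=n-1$ into $H(K_{n_1,n_2})=2e^{-n_1n_2}+n_1+n_2-2$, which is precisely what you do. Your added check that the orthonormality relations in the proof of Lemma 4 degenerate harmlessly when $n_1=1$ is a sensible precaution but not something the paper spells out.
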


\subsection{Graphs with maximum H index}

Here we are mainly interested in understanding why certain networks
display large values of the $H$ index. Then, we prove that among
the graphs with $n$ nodes, the maximum value of the $H$ index is
always obtained for the star graph $K_{1,n-1}$. We start this section
by proving a general results for trees, which is needed to prove the
upper bound.
\begin{lem}
\label{c}Let $T_{n}$ be a tree of $n$ nodes, then
\end{lem}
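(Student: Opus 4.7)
The statement is cut off, but in context the lemma must bound $H(T_n)$ from above by $H(K_{1,n-1}) = 2e^{1-n}+n-2$, with equality iff $T_n$ is the star (this is exactly what is needed to then deduce, in the main theorem, that the star maximizes $H$ over all graphs with $n$ nodes). My plan is to exploit two structural facts about trees that together pin down the behavior of $\sum_j e^{-\lambda_j^2}$.

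First, I would use the trace identity $\operatorname{tr}(A^2)=2m=2(n-1)$ to get $\sum_{j=1}^{n}\lambda_j^2=2(n-1)$. Second, a tree is bipartite, so its adjacency spectrum is symmetric about zero. Writing the positive eigenvalues as $\lambda_1\ge\lambda_2\ge\cdots\ge\lambda_p>0$, there are $p$ matching negative eigenvalues and $n-2p$ zero eigenvalues (equivalently, $p$ equals the matching number of $T_n$). Combining these two observations gives $\sum_{i=1}^{p}\lambda_i^2=n-1$, and splitting the trace formula for $H$ yields
$$H(T_n)=(n-2p)+2\sum_{i=1}^{p}e^{-\lambda_i^2}.$$

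Next I would invoke convexity of $f(x)=e^{-x}$ on $[0,\infty)$. On the simplex $\{(x_1,\dots,x_p):x_i\ge 0,\;\sum_i x_i=n-1\}$ the sum $\sum_i e^{-x_i}$ is convex, so its maximum is attained at a vertex, where all but one coordinate vanish and the remaining coordinate equals $n-1$. The vertex value is $(p-1)+e^{-(n-1)}$, so
$$H(T_n)\le (n-2p)+2\bigl[(p-1)+e^{-(n-1)}\bigr]=n-2+2e^{-(n-1)}.$$
Crucially, the $p$-dependence cancels, so the same upper bound holds for every possible value of the matching number.

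For the equality case, I would argue that equality in the convexity step forces exactly one of the $\lambda_i^2$ to equal $n-1$ and the rest (strictly positive by assumption) to be zero, which is only consistent if $p=1$. A tree with matching number $1$ has all edges sharing a common vertex, hence is $K_{1,n-1}$, and this graph does achieve the bound by the Corollary proved earlier. The main obstacle is the subtlety at the boundary of the simplex: strictly we have $\lambda_i^2>0$ for $i\le p$, so for $p>1$ the extremal configuration is only a supremum; however this gives \emph{strict} inequality $H(T_n)<H(K_{1,n-1})$ in that case, which is exactly what the equality characterization requires, so the issue is cosmetic rather than substantive.
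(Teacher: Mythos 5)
Your proof is correct, and it takes a genuinely different route from the paper's. The paper derives the lemma as a corollary of its Gauss--Lobatto quadrature bound $H(G)\leq 2m\frac{(e^{-b}-1)}{b}+n$, setting $m=n-1$ and taking $b=n-1$ from the Collatz--Sinogowitz inequality $\lambda_1(T_n)\leq\sqrt{n-1}$; the arithmetic then collapses to $n-2+2e^{1-n}=H(K_{1,n-1})$. You instead use only the trace identity $\sum_j\lambda_j^2=\mathrm{tr}(A^2)=2(n-1)$ together with the $\pm$-symmetry of a bipartite spectrum, reducing the problem to maximizing the convex function $\sum_{i}e^{-x_i}$ over the simplex $\{x_i\geq 0,\ \sum_i x_i=n-1\}$; the maximum sits at a vertex, the $p$-dependence cancels, and the same bound falls out. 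Your argument is more elementary and self-contained (no quadrature machinery, no Collatz--Sinogowitz), it applies verbatim to any bipartite graph with $n-1$ edges, and strict convexity gives you an equality characterization (equality iff $p=1$, i.e., iff $T_n=K_{1,n-1}$) that the paper does not state. What the paper's route buys is reuse: the identical quadrature bound is invoked again in the subsequent theorem for arbitrary connected graphs, where your bipartiteness hypothesis is unavailable. (Minor remark: identifying $p$ with the matching number is a true but unnecessary aside; spectrum symmetry and the trace identity are all you actually use.)
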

\begin{equation}
H(T_{n})\leq H(K_{1,n-1})
\end{equation}

\begin{proof}
We have the following upper bound

\begin{equation}
H(G)\leq2m\frac{(e^{-b}-1)}{b}+n\label{a-1}
\end{equation}
 where $m$ is the number of edges and $[0,b]$ is the interval that
contains all the eigenvalues of $A^{2}$. Since $A$ is irreducible
then it has a nonnegative real eigenvalue (name it $\lambda_{1}$)
which has maximum absolute value among all eigenvalues (Perron-Frobenius).

Now, Let $T_{n}$ be a tree with $n\geq2$, then Collatz and Sinogowitz
\cite{Collatz and Sinogowitz} have proved that 

\begin{equation}
\lambda_{1}(T_{n})\leq\lambda_{1}(K_{1,n-1})=\sqrt{n-1},
\end{equation}
 where the equality holds if $T_{n}$ is the star graph. Thus, the
interval $[0,n-1]$ contains all the eigenvalues of any tree $T_{n}$.
Now, substituting in (\ref{a-1})
\begin{equation}
H(T_{n})\leq2(n-1)\frac{(e^{1-n}-1)}{n-1}+n=n-2+2e^{1-n}.\label{b-1}
\end{equation}

Thus, for any tree of $n$ nodes $H(T_{n})\leq H(K_{1,n-1})$ .
\end{proof}
Now we prove an important result for general graphs, which also allow
us to understand the nature of the index $H$ when studying real-world
networks.
\begin{thm}
Let G be connected graph of n nodes, then

\begin{equation}
H(G)\leq H(K_{1,n-1}).
\end{equation}
\end{thm}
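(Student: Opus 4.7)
The plan is to combine the Gauss--Lobatto bound of Theorem 1 with a Schur-majorization argument, splitting into two cases depending on whether $\lambda_1^2$ exceeds $n-1$. In Case 1 ($\lambda_1^2 \leq n-1$), the interval $[0,n-1]$ contains the spectrum of $A^2$, so applying Theorem 1 with $b = n-1$ yields $H(G) \leq 2m(e^{1-n}-1)/(n-1) + n$; exactly as in the proof of Lemma \ref{c}, the hypothesis $m \geq n-1$ (for any connected graph) together with $(e^{1-n}-1)<0$ shows that the right-hand side is maximized at $m=n-1$, giving $H(G) \leq 2e^{1-n} + n-2 = H(K_{1,n-1})$.

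In Case 2 ($\lambda_1^2 > n-1$), I isolate the Perron contribution and bound the remaining terms by Schur-convexity. The $n-1$ nonnegative values $\lambda_2^2,\ldots,\lambda_n^2$ sum to $2m-\lambda_1^2$, and since $\sum e^{-x_j}$ is Schur-convex in $(x_j)$ (because $e^{-x}$ is convex) it is maximized over the simplex $\{x_j \geq 0,\ \sum x_j = 2m-\lambda_1^2\}$ at the extreme configuration $(2m-\lambda_1^2,0,\ldots,0)$. Hence
\[
\sum_{j=2}^n e^{-\lambda_j^2} \leq (n-2) + e^{-(2m-\lambda_1^2)}.
\]
Invoking Hong's inequality $\lambda_1^2 \leq 2m-n+1$ (valid for any connected graph on $n$ vertices with $m$ edges) gives $2m-\lambda_1^2 \geq n-1$, so $e^{-(2m-\lambda_1^2)} \leq e^{1-n}$; combining with $e^{-\lambda_1^2} \leq e^{1-n}$ from the Case~2 hypothesis yields $H(G) \leq 2e^{1-n} + n-2 = H(K_{1,n-1})$.

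The main obstacle I anticipate is that the Gauss--Lobatto bound of Theorem 1 is too weak on its own for dense graphs where $\lambda_1^2 \gg n-1$ (think $K_n$, where $\lambda_1^2 = (n-1)^2$ vastly exceeds $n-1$). The essential extra ingredient is Hong's spectral bound $\lambda_1 \leq \sqrt{2m-n+1}$, which supplies the lower bound on $2m-\lambda_1^2$ that drives Case 2; everything else reduces to Theorem 1 in Case 1 and to standard Schur-majorization (symmetric convex functions on a simplex are maximized at a vertex) in Case 2. Equality in the final bound is attained precisely at $K_{1,n-1}$, where $m=n-1$ and $\lambda_1^2=n-1$, so the two cases meet at the common value $2e^{1-n}+n-2$.
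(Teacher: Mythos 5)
Your proof is correct, and it takes a genuinely different---and in one important respect stronger---route than the paper's. The paper applies the Gauss--Lobatto bound once with $b=(n-1)^{2}$, notes that the resulting bound $n-2m\bigl(1-e^{-(n-1)^{2}}\bigr)/(n-1)^{2}$ is decreasing in $m$, evaluates it at the minimal value $m=n-1$, and then appeals to the tree lemma. You instead split on the size of $\lambda_{1}^{2}$: when $\lambda_{1}^{2}\leq n-1$ you rerun the quadrature argument on the much smaller interval $[0,n-1]$, exactly as in the proof of Lemma~\ref{c}; when $\lambda_{1}^{2}>n-1$ you abandon quadrature entirely in favour of convexity of $x\mapsto e^{-x}$ on the simplex $\bigl\{x_{j}\geq0,\ \sum_{j\geq2}x_{j}=2m-\lambda_{1}^{2}\bigr\}$ combined with Hong's bound $\lambda_{1}^{2}\leq 2m-n+1$, an external ingredient the paper does not use. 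What your route buys is logical completeness: the paper's bound with $b=(n-1)^{2}$ is \emph{not} below $H(K_{1,n-1})=n-2+2e^{1-n}$ for graphs with $m>n-1$ (for the triangle $K_{3}$ it gives approximately $1.53$ against $H(K_{1,2})\approx1.27$), so passing from ``the bound is largest at $m=n-1$'' to ``every connected graph satisfies the theorem'' leaves the non-tree graphs unhandled; your two cases cover all connected graphs without that leap, and your Case~2 is moreover independent of the quadrature machinery altogether. The price is importing Hong's inequality and a Schur-majorization step, both standard and correctly applied here (the majorization bound $\sum_{j\geq2}e^{-\lambda_{j}^{2}}\leq n-2+e^{-(2m-\lambda_{1}^{2})}$ is a valid relaxation since you only enlarge the feasible set, and equality throughout is indeed attained at the star).
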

\begin{proof}
The largest eigenvalue of any graph $G$ is less than or equal the
maximum degree. Thus the interval $[0,(n-1)^{2}]$ contains all the
eigenvalues of $A^{2}$ and we get from the quadrature-rule bound

\begin{equation}
H(G)\leq n-2m\frac{(1-e^{-(n-1)^{2}})}{(n-1)^{2}}.
\end{equation}
Now, $H(G)$ is maximum when $m$ is the lowest possible for a connected
graph. That is,
\begin{equation}
H(G)\leq n-2\frac{(1-e^{-(n-1)^{2}})}{n-1}.
\end{equation}
A connected graph with $n-1$ edges is a tree. Then, because of Lemma
(\ref{c}) we have that
\begin{equation}
H(G)\leq H(K_{1,n-1}).
\end{equation}

\end{proof}
Obviously, when $n\rightarrow\infty$, $H\left(K_{1,n-1}\right)\rightarrow n-2.$
In a similar way, when $n\rightarrow\infty$

\begin{equation}
H\left(K_{n_{1},n_{2}}\right)\rightarrow n_{1}+n_{2}-2=n-2.
\end{equation}

Thus last expression indicates that the complete bipartite graphs
also display the largest value of the $H$ index asymptotically as
$n\rightarrow\infty$. Indeed, we have studied all connected graphs
with 5, 6, 7, and 8 nodes and observed the following. Among the graphs
with $n$ nodes, as proved here, the maximum value is always reached
for the star graph $K_{1,n-1}$. It is then followed by the complete
bipartite graph $K_{2,n-2}$, then $K_{3,n-3}$, and so forth. For
instance, in the case $n=8$ we have $H\left(K_{1,7}\right)\approx6.001824$;
$H\left(K_{2,6}\right)\approx6.000012$; $H\left(K_{3,5}\right)\approx6.000001$;
$H\left(K_{4,4}\right)\approx6.000000$. This observation will play
a fundamental role in the analysis of random graphs and real-world
networks in the next sections of this work.

\subsection{Graphs with minimum H index }

As we have seen before (see Eq. (\ref{eq:H-nullity})) the largest
contribution to the $H$ index is made by the graph nullity $\eta$
and by the eigenvalues which are relatively close to zero. Let $x>0$
be a real number such that $\exp\left(-x^{2}\right)\sim0$ . Then, 

\begin{equation}
H\approx\sum_{\lambda_{j}\geq-x}^{\lambda_{j}\leq x}\exp\left(-\lambda_{j}^{2}\right).
\end{equation}

Consequently, the graphs with minimum $H$ index are those having
very small density of eigenvalues in the interval $\left(-x,x\right)$.
For instance, the graph having the smallest $H$ index among all connected
graphs with 8 nodes has eigenvalues: -2.0000, -1.7321, -1.0000, -1.0000,
-0.8136, 1.4707, 1.7321, 3.3429, which produces $H\approx1.4845$,
which is well approximated if we consider only the eigenvalues in
the interval $\left(-1.5,1.5\right)$. The graphs with minimum $H$
index among all connected graphs with $n=4,5,6,7,8$ are illustrated
in the Figure \ref{Minimum H}. A complete structural characterization
of these graphs is out of the scope of this work, but it calls the
attention the existence of bow-tie subgraphs in most of these graphs.

\begin{figure}
\begin{centering}
\includegraphics[width=0.75\textwidth]{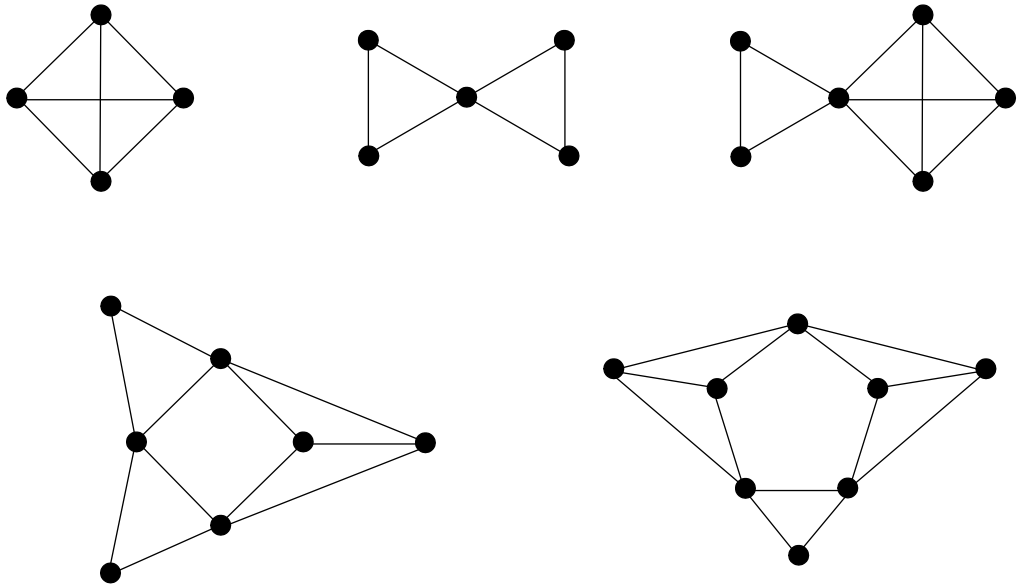}
\par\end{centering}

\protect\caption{Illustration of the graphs having minimum $H$ index among all connected
graphs with $n=4,5,6,7,8$. }

\label{Minimum H}
\end{figure}

\subsection{H Index of Random Networks}

In this section we study two different models of random graphs. They
are very ubiquitous as null models for studying real-world networks.
The first model is the Erd\H{o}s-R�nyi $G\left(n,p\right)$ \cite{ER model}
also known as the Gilbert model \cite{Gilbert}, in which a graph
with $n$ nodes is constructed by connecting nodes randomly in such
a way that each edge is included in $G\left(n,p\right)$ with probability
$p$ independent from every other edge. The second model was introduced
by Barab�si and Albert \cite{BA model} on the basis of a preferential
attachment process. In this model the graph is constructed from an
initial seed of $m_{0}$ vertices connected randomly like in an Erd\H{o}s-R�nyi
$G\left(n,p\right)$. Then, new nodes are added to the network in
such a way that each new node is connected to the existing ones with
a probability that is proportional to the degree of these existing
nodes. While the Erd\H{o}s-R�nyi $G\left(n,p\right)$ random graphs
have a Poisson degree distribution (when $n\rightarrow\infty$), the
Barab�si-Albert ones show power-law degree distribution of the form:
$p\left(k\right)\sim k^{-3},$where $p\left(k\right)$ is the probability
of finding a node with degree $k$. In term of their spectra the main
difference is that ER graphs display the Wigner semi-circle distribution
\cite{Wigner} of eigenvalues when $n\rightarrow\infty$ of the form

\begin{equation}
\mathrm{\rho(\lambda)=\begin{cases}
\frac{2\sqrt{r^{2}-\lambda^{2}}}{\pi r^{2}} & ,-r\leq\lambda\leq r\\
0, & otherwise,
\end{cases}}\label{eq:semicircle}
\end{equation}

where $r=2\sqrt{np(1-p)}$ . However, the BA networks have a triangular
distribution \cite{EstradaBook} of eigenvalues of the form

\begin{equation}
\rho(\lambda)=\begin{cases}
\frac{\lambda+r}{r^{2}}, & -r\leq\lambda<0\\
\frac{r-\lambda}{r^{2}}, & 0<\lambda\leq r\\
0, & otherwise.
\end{cases}\label{eq:triangular}
\end{equation}

Using these distributions we obtain the following results.
\begin{thm}
For an Erd\H{o}s-R�nyi random graph $G(n,p)$ with $\frac{\ln n}{n}\ll p$
we have
\begin{equation}
H\left(ER\right)=ne^{\frac{-r^{2}}{2}}(I_{0}(\frac{r^{2}}{2})+I_{1}(\frac{r^{2}}{2}))\label{eq:H(ER)}
\end{equation}

almost surely, as $n\rightarrow\infty$, where $r=2\sqrt{np(1-p)}$
and $I_{n}$ is the modified Bessel function of the first kind.\end{thm}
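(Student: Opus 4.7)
The plan is to turn the sum $H=\sum_{j=1}^n e^{-\lambda_j^2}$ into an integral against the Wigner semicircle density (\ref{eq:semicircle}), and then reduce that integral to modified Bessel functions using a trigonometric substitution together with the identity $I_\nu(x)=\tfrac{1}{\pi}\int_0^\pi e^{x\cos u}\cos(\nu u)\,du$ for integer $\nu$ (this is the first term in the definition of $I_\gamma$ from the Preliminaries; the sine part vanishes for integer order). The regime $\tfrac{\ln n}{n}\ll p$ is precisely what guarantees that the empirical spectral distribution of $A$ (rescaled appropriately), minus the rank-one outlier coming from $\mathbb{E}[A]$, converges almost surely to the semicircle law of radius $r=2\sqrt{np(1-p)}$; the Perron eigenvalue $\lambda_1\sim np$ contributes $e^{-\lambda_1^2}=o(1)$ to $H$ and may be discarded.

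First I would write
\begin{equation}
H(ER)\;\sim\; n\int_{-r}^{r} e^{-\lambda^{2}}\,\rho(\lambda)\,d\lambda
\;=\;\frac{2n}{\pi r^{2}}\int_{-r}^{r}\sqrt{r^{2}-\lambda^{2}}\;e^{-\lambda^{2}}\,d\lambda
\end{equation}
almost surely as $n\to\infty$, and then use the even symmetry of the integrand and the substitution $\lambda=r\sin\theta$, $d\lambda=r\cos\theta\,d\theta$, $\sqrt{r^{2}-\lambda^{2}}=r\cos\theta$, to obtain
\begin{equation}
H(ER)\;\sim\;\frac{4n}{\pi}\int_{0}^{\pi/2}\cos^{2}\theta\;\exp\!\left(-r^{2}\sin^{2}\theta\right)d\theta .
\end{equation}

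Next I would apply the half-angle identities $\sin^{2}\theta=(1-\cos 2\theta)/2$ and $\cos^{2}\theta=(1+\cos 2\theta)/2$, factor out $e^{-r^{2}/2}$, and change variable $u=2\theta$ so that the integral is transported to $[0,\pi]$:
\begin{equation}
H(ER)\;\sim\;\frac{n\,e^{-r^{2}/2}}{\pi}\int_{0}^{\pi}\bigl(1+\cos u\bigr)\exp\!\left(\tfrac{r^{2}}{2}\cos u\right)du .
\end{equation}
Splitting into the $1$ and $\cos u$ parts and applying the Bessel integral representation recalled above with $x=r^{2}/2$, the two pieces become $\pi I_0(r^{2}/2)$ and $\pi I_1(r^{2}/2)$ respectively, yielding the claimed formula (\ref{eq:H(ER)}).

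The main obstacle is not the analytic manipulation, which is essentially a substitution and pattern matching, but rather the probabilistic step justifying the replacement of $n^{-1}\sum_j e^{-\lambda_j^2}$ by the semicircle integral. One has to invoke the almost-sure weak convergence of the empirical spectral distribution of the (rescaled) adjacency matrix to the semicircle law in the regime $\tfrac{\ln n}{n}\ll p$, together with uniform boundedness of the test function $\lambda\mapsto e^{-\lambda^{2}}$, and dispose of the single Perron outlier whose contribution is exponentially small in $n$. Given that (\ref{eq:semicircle}) is taken as input from the cited random matrix theory literature, the remainder of the argument is the Bessel-function calculation outlined above.
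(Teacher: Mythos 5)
Your proposal is correct and follows essentially the same route as the paper: discard the Perron eigenvalue (whose contribution $e^{-\lambda_1^2}$ with $\lambda_1\sim np$ is negligible, justified via Krivelevich--Sudakov in the regime $np\gg\ln n$), replace the remaining sum by the integral of $e^{-\lambda^2}$ against the semicircle density, substitute $\lambda=r\sin\theta$, apply the half-angle identities, and identify the resulting $[0,\pi]$ integrals with $\pi I_0(r^2/2)$ and $\pi I_1(r^2/2)$. The only difference is cosmetic: you are more explicit than the paper about the probabilistic justification for passing from the empirical spectral sum to the semicircle integral.
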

\begin{proof}
We know that the spectral density of $G(n,p)$ converges to the semicircular
distribution (\ref{eq:semicircle}) as $n\rightarrow\infty$. Also,
Krivelevich and Sudakov \cite{Krivelevich and Sudakov} showed that
the largest eigenvalue $\lambda_{1}$ of $G(n,p)$ is almost surely
$(1+o(1))np$ provided that $np\gg\ln n$. Then, 
\begin{eqnarray}
H\left(ER\right) & = & \exp(-\lambda_{1}^{2})+\stackrel[i=2]{n}{\sum}\exp(-\lambda_{i}^{2})\\
 & = & e^{-\lambda_{1}^{2}}+n\left(\frac{1}{n}\stackrel[i=2]{n}{\sum}e^{-\lambda_{i}^{2}}\rho(\lambda)\right)
\end{eqnarray}

When $n\rightarrow\infty$ we have
\begin{eqnarray}
H\left(ER\right) & = & n\int_{-r}^{r}\rho(\lambda)e^{-\lambda^{2}}d\lambda\\
 & = & \frac{4n}{\pi r^{2}}\int_{0}^{r}\sqrt{r^{2}-\lambda^{2}}e^{-\lambda^{2}}d\lambda\\
 & = & \frac{4n}{\pi r^{2}}\int_{0}^{\frac{\pi}{2}}r^{2}\cos^{2}\theta e^{-r^{2}\sin^{2}\theta}d\theta\\
 & = & \frac{4n}{\pi}\int_{0}^{\frac{\pi}{2}}\frac{1}{2}(1+\cos2\theta)e^{\frac{-r^{2}}{2}(1-\cos2\theta)}d\theta\\
 & = & 2ne^{\frac{-r^{2}}{2}}(\frac{1}{\pi}\int_{0}^{\frac{\pi}{2}}e^{\frac{r^{2}}{2}\cos2\theta}d\theta+\frac{1}{\pi}\int_{0}^{\frac{\pi}{2}}\cos2\theta e^{\frac{r^{2}}{2}\cos2\theta}d\theta)\\
 & = & ne^{\frac{-r^{2}}{2}}(\frac{1}{\pi}\int_{0}^{\pi}e^{\frac{r^{2}}{2}\cos u}du+\frac{1}{\pi}\int_{0}^{\pi}\cos ue^{\frac{r^{2}}{2}\cos u}du)\\
 & = & ne^{\frac{-r^{2}}{2}}(I_{0}(\frac{r^{2}}{2})+I_{1}(\frac{r^{2}}{2}))
\end{eqnarray}

\end{proof}
We now consider the case of the Barab�si-Albert (BA) model as a representative
of random graphs with power-law degree distribution. We then prove
the following result.
\begin{thm}
Let $G$ be a $BA$ random network. Then, when $n\rightarrow\infty$,
the $H$ index of a BA network is bounded as
\begin{equation}
H\left(BA\right)=\dfrac{n}{r^{2}}\left(\sqrt{\pi}r\textnormal{erf}\left(r\right)+e^{-r^{2}}-1\right).\label{eq:H(BA)}
\end{equation}

where $r=2\sqrt{np(1-p)}$ and $\textnormal{erf}\left(\right)$is
the error function.\end{thm}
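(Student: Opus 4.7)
The plan is to mirror the structure of the proof of Theorem 2 (the Erd\H{o}s-R\'enyi case), now exploiting the triangular spectral density (\ref{eq:triangular}) that governs the bulk of the BA spectrum as $n\to\infty$. First I would write $H(BA)=\sum_{i=1}^{n}\exp(-\lambda_{i}^{2})$ and peel off the Perron eigenvalue, using the fact that $\lambda_{1}$ grows with $n$ in a BA graph (so $\exp(-\lambda_{1}^{2})\to 0$) to absorb it into the asymptotic error. The remaining sum is then replaced by $n\int_{-r}^{r}\rho(\lambda)\exp(-\lambda^{2})\,d\lambda$, exactly as in the ER proof, giving a bulk-average representation to which the explicit form of $\rho$ can be applied.

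Next I would exploit the symmetry $\rho(-\lambda)=\rho(\lambda)$ of the triangular density together with the evenness of $\exp(-\lambda^{2})$ to fold the integral onto $[0,r]$, obtaining
\[
H(BA)=\frac{2n}{r^{2}}\int_{0}^{r}(r-\lambda)\exp(-\lambda^{2})\,d\lambda.
\]
Splitting this into an ``$r\,e^{-\lambda^{2}}$'' piece and a ``$-\lambda\,e^{-\lambda^{2}}$'' piece produces two elementary integrals: the first reduces to $\tfrac{\sqrt{\pi}}{2}\,\mathrm{erf}(r)$ by the very definition of the error function, while the second evaluates to $\tfrac{1}{2}(1-e^{-r^{2}})$ via the substitution $u=\lambda^{2}$. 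Multiplying out the prefactor $2n/r^{2}$ and collecting terms then yields precisely (\ref{eq:H(BA)}).

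The main obstacle I anticipate is not the calculation---which is essentially a page of elementary integration---but the rigorous passage from the sum over eigenvalues to the integral against the limiting triangular law. Unlike the ER setting, where the semicircle law is a classical theorem of Wigner, the triangular law for BA networks rests largely on heuristic and numerical evidence; a fully rigorous version would require a quantitative convergence rate for the empirical spectral distribution together with a growth estimate on $\lambda_{1}$ (for instance $\lambda_{1}=O(n^{1/4})$) to guarantee that the contribution $\exp(-\lambda_{1}^{2})$ of the Perron eigenvalue really is negligible. Following the convention already used in the proof of Theorem 2, I would present these points as asymptotic identities rather than tracking sharp error terms, which keeps the argument self-contained and symmetric with the ER computation already in the paper.
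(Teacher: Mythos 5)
Your proposal is correct and follows essentially the same route as the paper: replace the eigenvalue sum by $n\int_{-r}^{r}\rho(\lambda)e^{-\lambda^{2}}\,d\lambda$ with the triangular density, reduce by symmetry to $\tfrac{2n}{r^{2}}\int_{0}^{r}(r-\lambda)e^{-\lambda^{2}}\,d\lambda$, and evaluate the two elementary pieces to obtain $\tfrac{n}{r^{2}}\bigl(\sqrt{\pi}\,r\,\mathrm{erf}(r)+e^{-r^{2}}-1\bigr)$. Your explicit removal of the Perron eigenvalue and your caveat about the non-rigorous status of the triangular law are reasonable refinements, but they do not change the argument, which the paper presents at the same level of (heuristic) rigor.
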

\begin{proof}
We know that the density of $BA$ graphs follows a triangular distribution
(\ref{eq:triangular}). Thus
\begin{eqnarray}
H & \left(BA\right)= & \stackrel[j=1]{n}{\sum}\rho(\lambda_{j})e^{-\lambda_{j}^{2}}\\
 & = & n\left(\frac{1}{n}\stackrel[j=1]{n}{\sum}\rho(\lambda_{j})e^{-\lambda_{j}^{2}}\right)\\
 & = & n\left(\stackrel[-r]{r}{\int}\rho(\lambda)e^{-\lambda^{2}}d\lambda,\,\,as\,n\rightarrow\infty\right)\\
 & = & n\left(\stackrel[-r]{0}{\int}\frac{\lambda+r}{r^{2}}e^{-\lambda^{2}}d\lambda+\stackrel[0]{r}{\int}\frac{r-\lambda}{r^{2}}e^{-\lambda^{2}}d\lambda\right)\\
 & _{=} & \dfrac{n}{r^{2}}\left(\sqrt{\pi}r\textnormal{erf}\left(r\right)+e^{-r^{2}}-1\right).
\end{eqnarray}

\end{proof}
In Figure (\ref{H index ER and BA}(a)) we illustrate the results
obtained for the $H$ index of ER random graphs $G_{ER}\left(1000,p\right)$
in which $p$ is systematically changed from 0.008 to 0.04. The results
are shown for both, the formula (\ref{eq:H(ER)}) and the calculation
using the function 'expm' implemented in Matlab\textregistered . As
can be seen for ER networks, as soon as the probability increases,
such that $np\gg\ln n$, the two results quickly converge to a common
value, i.e., the error decay quickly with the increase of $p$. In
Figure (\ref{H index ER and BA}(b)) we also plot similar results
for the BA model using $G_{BA}\left(1000,m_{0}\right)$ in which $m_{0}$
is systematically varied from 4 to 20. In this case the behavior is
more complex as there is a crossing point between the two curves.
This difference between the behavior of the theoretical function (\ref{eq:H(BA)})
for low and large densities of the graphs may be due to the fact that
the eigenvalue distribution of the BA networks is different at these
two density regimes. According to our computational experiments, it
is only true that the BA networks display triangular eigenvalue distributions
for relatively small edge densities and deformations of it occurs
for larger densities, which may produce the observed deviations from
the theoretical and computational results. More theoretical work is
needed to understand completely the eigenvalue distribution of these
networks at different density regimes. Such studies are clearly out
of the scope of the current work.

\begin{figure}
\begin{centering}
(a)\includegraphics[width=0.5\textwidth]{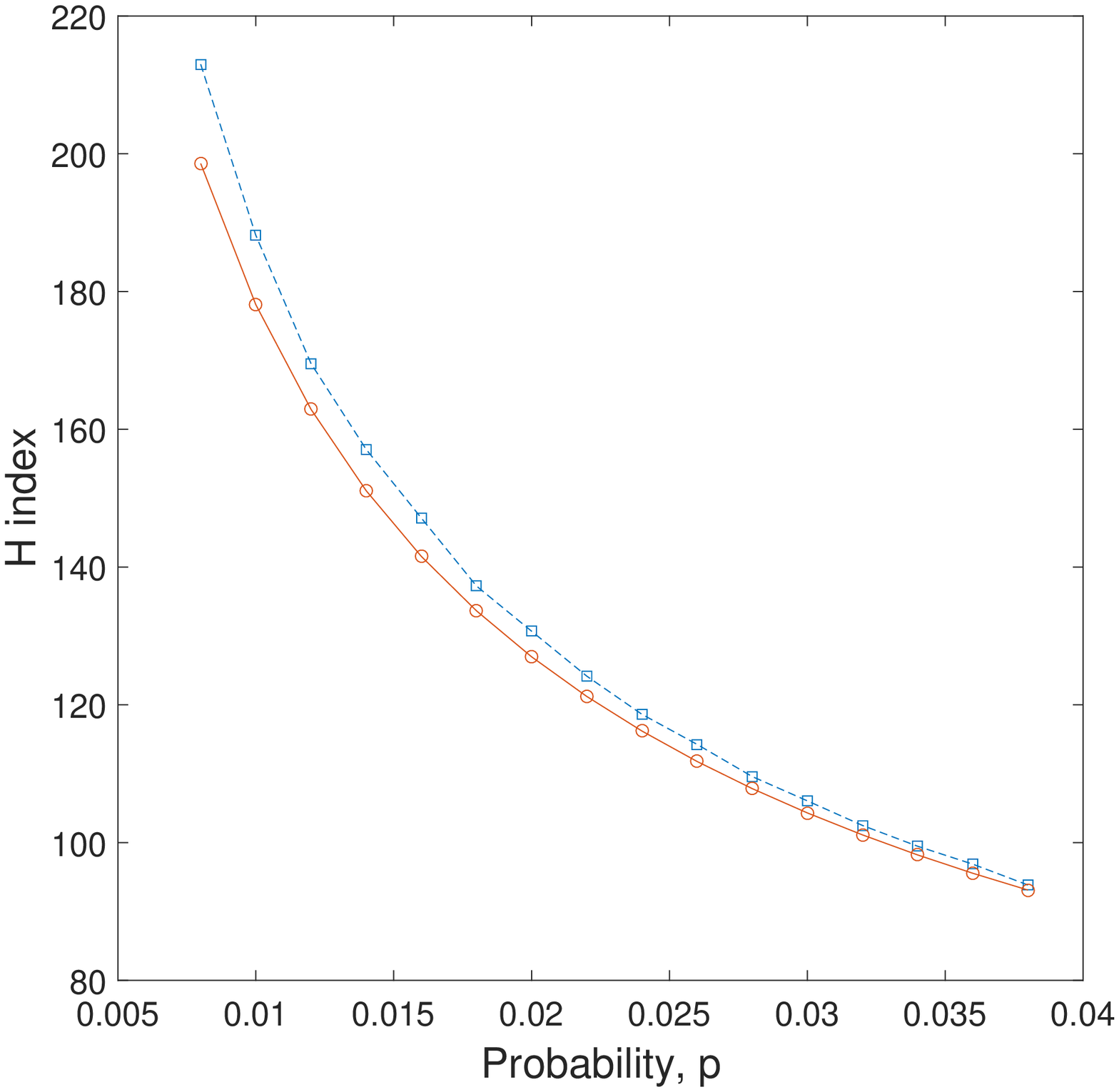} \\
(b)\includegraphics[width=0.5\textwidth]{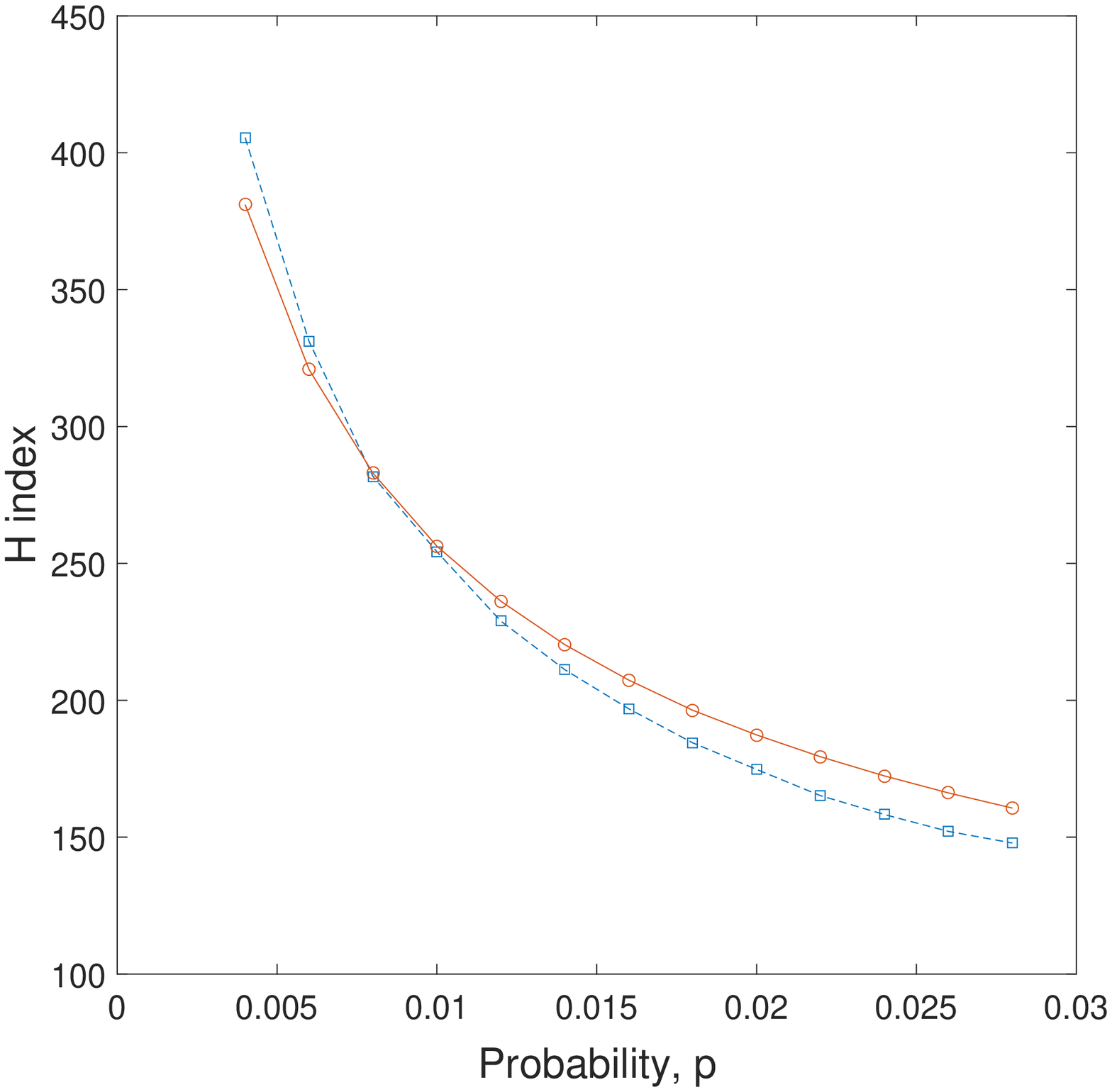}
\par\end{centering}

\protect\caption{(a) Change of the $H$ index with the increase of the probability
$p$ in ER random graphs $G_{ER}\left(1000,p\right)$ obtained using
the formula (\ref{eq:H(ER)}) (empty circles and solid line) and using
the function 'expm' in Matlab (squares and broken line). (b) Change
of the $H$ index with the increase of $m_{0}$ in BA random graphs
$G_{BA}\left(1000,m_{0}\right)$ obtained using the formula (\ref{eq:H(BA)})
(empty circles and solid line) and using the function 'expm' in Matlab
(squares and broken line). All the calculations are the average of
100 random realizations. }

\label{H index ER and BA}
\end{figure}

It is easy to show that for a given value of $r$, $H\left(BA\right)>H\left(ER\right).$
That is, for the same network density the network having power-law
degree distribution has larger value of the $H$ index than the analogous
one with Poisson degree distribution. This result is somehow expected
from the qualitative analysis of the eigenvalues distributions of
these two classes of random networks. While the ER networks display
a semicircle distribution of eigenvalues, the BA networks for small
values of $r$ displays a triangular distribution peaked at $\lambda_{j}=0$.
In other words, the nullity of the BA graphs is larger than that of
the ER ones, and the concentration of eigenvalues close to zero is
also larger for the BA networks than for the ER. Both characteristics
give rise to larger values of the $H$ index in the BA networks. The
question that arises here is what this difference implies from the
structural point of view. We will analyze this question in the remaining
part of this section.

We have already seen that the largest values of the $H$ index occurs
in graphs having complete bipartite structures. Then, in order to
understand the main structural differences giving rise to the larger
$H$ index in BA networks than in ER ones we consider the existence
of such subgraphs in both networks. In particular, we will consider
the existence of complete bipartite subgraphs, known as bicliques,
in both kind of networks. In the current work we will give only a
qualitative explanation of this difference which will point to the
direction of a further quantitative analysis. Let us start by the
analysis of the BA networks. These networks are created from an initial
seed of $n_{0}$ nodes connected randomly and independently according
to the ER model. Then, at each stage of the evolution of the network,
a new node is connected preferentially to $m_{0}\leq n_{0}$ nodes.
The connection probability is proportional to the degree of the existing
nodes. Because an ER network is uncorrelated the probability that
the highest degree nodes are connected to each other is relatively
low. Then, when a new node is added and connected to $m_{0}$ of the
highest degree existing nodes there is a high probability that a biclique
is formed. Such a process is continued as more nodes are added to
the graph, resulting a large bicliques with high probability (see
Figure \ref{BA model}). The creation of an ER network follows a completely
different process in which pairs of nodes are connected randomly and
independently, which does not generates any preferred subgraphs, thus
not producing a large number of bicliques. This qualitative analysis
explaining structurally the existence of networks with high values
of the $H$ index will be very useful in the next section of this
work where we will analyze real-world networks.

\begin{figure}
\begin{centering}
\includegraphics[width=1\textwidth]{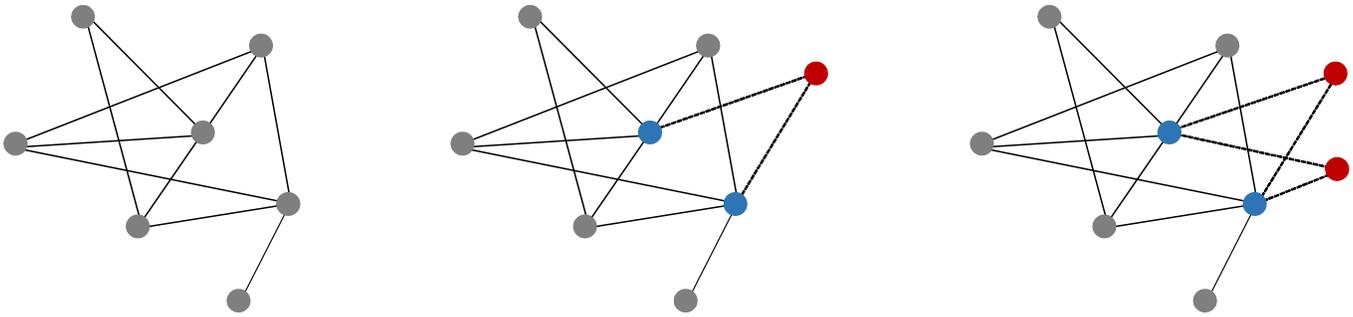}
\par\end{centering}

\protect\caption{Illustration of the evolution of a graph under the BA model to sketch
how bicliques are formed in such kind of networks. (a) Seed of $n_{0}=7$
nodes created with a Poissonian degree distribution to start the BA
evolution process. (b) Given $m_{0}=2$ the new node (red one) is
preferentially attached to those with the highest degree among the
existing $n_{0}$ ones (marked in blue). (c) Second iteration of the
process, which creates a biclique $K_{2,2}$ (red and blue nodes joined
by dotted lines).}

\label{BA model}
\end{figure}

\section{Studies of real-world networks }

\subsection{Datasets}

In this section we study a group of real-world networks representing
a variety of social, environmental, technological, infrastructural
and biological complex systems. A description of the networks and
their main characteristics are given below. 

\textit{Brain networks }
\begin{itemize}
\item Neurons: Neuronal synaptic network of the nematode \textit{C. elegans}.
Included all data except muscle cells and using all synaptic connections
\cite{Milo datasets}; Cat and macaque visual cortices: the brain
networks of macaque visual cortex and cat cortex, after the modifications
introduced by Sporn and K�tter \cite{Brain networks datasets}.
\end{itemize}
\textit{Ecological networks }
\begin{itemize}
\item Benguela: Marine ecosystem of Benguela off the southwest coast of
South Africa \cite{Benguela}; Bridge Brook: Pelagic species from
the largest of a set of 50 New York Adirondack lake food webs \cite{Bridge Brooks};
Canton Creek: Primarily invertebrates and algae in a tributary, surrounded
by pasture, of the Taieri River in the South Island of New Zealand
\cite{Canton}; Chesapeake Bay: The pelagic portion of an eastern
U.S. estuary, with an emphasis on larger fishes \cite{Chesapeake};
Coachella: Wide range of highly aggregated taxa from the Coachella
Valley desert in southern California \cite{Coachella}; El Verde:
Insects, spiders, birds, reptiles and amphibians in a rainforest in
Puerto Rico \cite{ElVerde}; Grassland: all vascular plants and all
insects and trophic interactions found inside stems of plants collected
from 24 sites distributed within England and Wales \cite{Grassland};
Little Rock: Pelagic and benthic species, particularly fishes, zooplankton,
macroinvertebrates, and algae of the Little Rock Lake, Wisconsin,
U.S. \cite{LittleRock}; Reef Small: Caribbean coral reef ecosystem
from the Puerto Rico-Virgin Island shelf complex \cite{Reef Small};
Scotch Broom: Trophic interactions between the herbivores, parasitoids,
predators and pathogens associated with broom, Cytisus scoparius,
collected in Silwood Park, Berkshire, England, UK \cite{Scotch Broom};
Shelf: Marine ecosystem on the northeast US shelf \cite{Shelf}; Skipwith:
Invertebrates in an English pond \cite{Skipwith}; St. Marks: Mostly
macroinvertebrates, fishes, and birds associated with an estuarine
seagrass community, Halodule wrightii, at St. Marks Refuge in Florida
\cite{StMarks}; St. Martin: Birds and predators and arthropod prey
of Anolis lizards on the island of St. Martin, which is located in
the northern Lesser Antilles \cite{StMartin}; Stony Stream: Primarily
invertebrates and algae in a tributary, surrounded by pasture, of
the Taieri River in the South Island of New Zealand in native tussock
habitat \cite{Stony}; Ythan\_1: Mostly birds, fishes, invertebrates,
and metazoan parasites in a Scottish Estuary \cite{Ythan1} ;Ythan\_2:
Reduced version of Ythan1 with no parasites \cite{Ythan2}. 
\item Termite: The networks of three-dimensional galleries in termite nests
\cite{Termite Moulds}; Ant: The network of galleries created by ants
\cite{Ants}; Dolphins: social network of frequent association between
62 bottlenose dolphins living in the waters off New Zealand \cite{Dolphins}; 
\end{itemize}
\textit{Informational networks }
\begin{itemize}
\item Centrality: Citation network of papers published in the field of Network
Centrality \cite{centrality,deNooy}; GD: Citation network of papers
published in the Proceedings of Graph Drawing during the period 1994-2000
\cite{GD}; ODLIS: Vocabulary network of words related by their definitions
in the Online Dictionary of Library and Information Science. Two words
are connected if one is used in the definition of the other \cite{ODLIS};
Roget: Vocabulary network of words related by their definitions in
Roget\textquoteright s Thesaurus of English. Two words are connected
if one is used in the definition of the other \cite{Roget}; Small
World: Citation network of papers that cite S. Milgram's 1967 Psychology
Today paper or use Small World in title \cite{Batagelj Mrvar}. 
\end{itemize}
\textit{Biological networks }
\begin{itemize}
\item Protein-protein interaction networks in: \textit{Kaposi sarcoma herpes
virus} (KSHV) \cite{KSHV}; \textit{P. falciparum} (malaria parasite)
\cite{malaria}; \textit{S. cerevisiae} (yeast) \cite{PIN_yeast_1,PIN_yeast_2};
\textit{A. fulgidus} \cite{PIN_Afulgidus}; \textit{H. pylori} \cite{PIN_Hpylori};
\textit{E. coli} \cite{PIN_Ecoli} and \textit{B. subtilis} \cite{PIN_Bsubtilis}.
\item Trans\_E.coli: Direct transcriptional regulation between operons in
\textit{Escherichia coli} \cite{MiloDatasets2,MiloDatasets3}; Trans\_sea\_urchin:
Developmental transcription network for sea urchin endomesoderm development.
\cite{MiloDatasets2}; Trans\_yeast: Direct transcriptional regulation
between genes in \textit{Saccaromyces cerevisae}. \cite{Milo datasets,MiloDatasets2}. 
\end{itemize}
\textit{Social and economic networks }
\begin{itemize}
\item Corporate: American corporate elite formed by the directors of the
625 largest corporations that reported the compositions of their boards
selected from the Fortune 1000 in 1999 \cite{Corporate}; Geom: Collaboration
network of scientists in the field of Computational Geometry \cite{Batagelj Mrvar};
Prison: Social network of inmates in prison who chose \textquotedblleft What
fellows on the tier are you closest friends with?\textquotedblright{}
\cite{Prison}; Drugs: Social network of injecting drug users (IDUs)
that have shared a needle in the last six months \cite{Moody}; Zachary:
Social network of friendship between members of the Zachary karate
club \cite{Zachary}; College: Social network among college students
in a course about leadership. The students choose which three members
they wanted to have in a committee \cite{College}; ColoSpring: The
risk network of persons with HIV infection during its early epidemic
phase in Colorado Spring, USA, using analysis of community wide HIV/AIDS
contact tracing records (sexual and injecting drugs partners) from
1985-1999 \cite{ColoSpg}; Galesburg: Friendship ties among 31 physicians
\cite{deNooy}; High\_Tech: Friendship ties among the employees in
a small high-tech computer firm which sells, installs, and maintain
computer systems \cite{HighTech,deNooy}; Saw Mills: Social communication
network within a sawmill, where employees were asked to indicate the
frequency with which they discussed work matters with each of their
colleagues \cite{SawMill,deNooy}; 
\end{itemize}
\textit{Technological and infrastructural networks }
\begin{itemize}
\item Electronic: Three electronic sequential logic circuits parsed from
the ISCAS89 benchmark set, where nodes represent logic gates and flip-flop
\cite{Milo datasets}; USAir97: Airport transportation network between
airports in US in 1997 \cite{Batagelj Mrvar}; Internet: The internet
at the Autonomous System (AS) level as of September 1997 and of April
1998 \cite{Internet}; Power Grid: The power grid network of the Western
USA \cite{WattsStrogatz}. 
\end{itemize}
\textit{Software networks }
\begin{itemize}
\item Collaboration networks associated with six different open-source software
systems, which include collaboration graphs for three Object Oriented
systems written in C++, and call graphs for three procedural systems
written in C. The class collaboration graphs are from version 4.0
of the VTK visualization library; the CVS snapshot dated 4/3/2002
of Digital Material (DM), a library for atomistic simulation of materials;
and version 1.0.2 of the AbiWord word processing program. The call
graphs are from version 3.23.32 of the MySQL relational database system,
and version 1.2.7 of the XMMS multimedia system. Details of the construction
and/or origin of these networks are provided in Myers \cite{Myers Software}.
\end{itemize}

\subsection{Analysis of real-world networks}

The sizes of the networks studied here range from 29 to 4,941 nodes.
Then, in order to avoid any size influence, we normalize the $H$
index by dividing it by the number of nodes of the network. We will
call $\hat{H}$ to the normalized index. The normalized index $\hat{H}$
ranges from about 0.14 to about 0.75 for the studied networks, indicating
that real-world networks cover most of the values that this index
can take (see \ref{Table-1.-Dataset}). The scatterplot of the normalized
nullity versus the normalized $H$ index for the 61 real-world networks
studied here (plot not shown) reveals that although both indices follow
the same trend, there are important differences among them. In particular,
we can observe that there are 9 networks with zero nullity which display
values of $\hat{H}$ ranging from about 0.14 (the lowest $\hat{H}$
index) to about 0.36 (ranked 25th in increasing order of $\hat{H}$
index). 

\begin{longtable}{>{\raggedright}p{3.5cm}>{\centering}p{1.2cm}>{\centering}p{2.2cm}>{\centering}p{1.8cm}>{\centering}p{1.2cm}>{\centering}p{1.5cm}}
\hline 
Name & $n$ & $H$ & $EE$ & $\eta$ & $r$\tabularnewline
\hline 
Ants & 74 & 30.998 & 2.64E+02 & 14 & -0.102\tabularnewline
Benguela & 29 & 9.573 & 4.11E+06 & 0 & 0.021\tabularnewline
BridgeBrook & 75 & 56.018 & 9.20E+08 & 48 & -0.668\tabularnewline
Canton & 108 & 40.333 & 3.12E+08 & 24 & -0.226\tabularnewline
CatCortex & 52 & 12.636 & 8.95E+09 & 0 & -0.044\tabularnewline
Centrality\_literature & 118 & 42.976 & 2.44E+08 & 9 & -0.202\tabularnewline
Chesapeake & 33 & 13.240 & 4.71E+02 & 3 & -0.196\tabularnewline
Coachella & 30 & 10.984 & 7.61E+07 & 0 & 0.035\tabularnewline
ColoSpg & 324 & 182.077 & 1.15E+03 & 142 & -0.295\tabularnewline
CorporatePeople & 1586 & 228.395 & 1.27E+10 & 0 & 0.268\tabularnewline
Dolphins & 62 & 20.845 & 2.06E+03 & 2 & -0.044\tabularnewline
Drugs & 616 & 279.467 & 6.91E+07 & 131 & -0.117\tabularnewline
Electronic1 & 122 & 37.694 & 4.84E+02 & 0 & -0.002\tabularnewline
Electronic2 & 252 & 77.982 & 1.04E+03 & 8 & -0.006\tabularnewline
Electronic3 & 512 & 158.658 & 2.17E+03 & 24 & -0.030\tabularnewline
ElVerde & 156 & 51.696 & 4.76E+13 & 5 & -0.174\tabularnewline
Galesburg & 31 & 9.519 & 4.36E+02 & 1 & -0.135\tabularnewline
GD & 249 & 90.440 & 1.60E+04 & 15 & 0.098\tabularnewline
Geom & 3621 & 1462.396 & 4.04E+12 & 537 & 0.168\tabularnewline
Hi\_tech & 33 & 10.975 & 2.95E+03 & 1 & -0.087\tabularnewline
Internet1997 & 3015 & 2148.635 & 6.17E+13 & 1883 & -0.229\tabularnewline
Internet1998 & 3522 & 2473.122 & 1.42E+15 & 2158 & -0.210\tabularnewline
LittleRockA & 181 & 117.772 & 5.32E+17 & 93 & -0.234\tabularnewline
MacaqueVisualCortex & 32 & 9.665 & 1.26E+06 & 1 & 0.008\tabularnewline
Neurons & 280 & 69.083 & 1.31E+10 & 3 & -0.069\tabularnewline
ODLIS & 2898 & 1131.046 & 1.54E+19 & 270 & -0.173\tabularnewline
PIN\_Afulgidus & 32 & 16.366 & 9.91E+01 & 12 & -0.472\tabularnewline
PIN\_Bsubtilis & 84 & 53.144 & 3.52E+02 & 46 & -0.486\tabularnewline
PIN\_Ecoli & 230 & 102.189 & 8.30E+06 & 57 & -0.015\tabularnewline
PIN\_Hpyroli & 710 & 397.649 & 4.60E+04 & 316 & -0.243\tabularnewline
PIN\_KSHV & 50 & 18.119 & 1.82E+03 & 2 & -0.058\tabularnewline
PIN\_Malaria & 229 & 83.377 & 2.25E+04 & 13 & -0.083\tabularnewline
PIN\_Yeast & 2224 & 1135.731 & 1.94E+08 & 754 & -0.105\tabularnewline
Power\_grid & 4941 & 1907.307 & 2.13E+04 & 593 & 0.003\tabularnewline
PRISON & 67 & 20.325 & 7.08E+02 & 0 & 0.103\tabularnewline
ReefSmall & 50 & 12.888 & 2.07E+10 & 0 & -0.193\tabularnewline
Roget & 994 & 264.570 & 2.38E+05 & 2 & 0.174\tabularnewline
Sawmill & 36 & 12.307 & 2.57E+02 & 2 & -0.071\tabularnewline
ScotchBroom & 154 & 103.975 & 2.46E+06 & 90 & -0.311\tabularnewline
Shelf & 81 & 20.724 & 1.60E+18 & 2 & -0.094\tabularnewline
Skipwith & 35 & 15.023 & 3.87E+09 & 7 & -0.319\tabularnewline
SmallWorld & 233 & 115.730 & 1.27E+09 & 70 & -0.303\tabularnewline
College & 32 & 8.049 & 5.36E+02 & 0 & -0.119\tabularnewline
Software\_Abi & 1035 & 575.133 & 1.65E+05 & 418 & -0.086\tabularnewline
Software\_Digital & 150 & 82.277 & 1.31E+03 & 63 & -0.228\tabularnewline
Software\_Mysql & 1480 & 648.971 & 2.70E+09 & 282 & -0.083\tabularnewline
Software\_VTK & 771 & 440.251 & 1.11E+05 & 324 & -0.195\tabularnewline
Software\_XMMS & 971 & 478.168 & 4.64E+04 & 294 & -0.114\tabularnewline
StMarks & 48 & 13.607 & 1.43E+05 & 0 & 0.111\tabularnewline
StMartin & 44 & 14.438 & 2.78E+05 & 2 & -0.153\tabularnewline
Stony & 112 & 41.359 & 7.23E+09 & 30 & -0.222\tabularnewline
Termite\_1 & 507 & 206.581 & 1.92E+03 & 75 & -0.046\tabularnewline
Termite\_2 & 260 & 116.912 & 7.32E+02 & 58 & -0.150\tabularnewline
Termite\_3 & 268 & 100.975 & 1.89E+03 & 23 & 0.045\tabularnewline
Trans\_Ecoli & 328 & 214.517 & 1.06E+04 & 184 & -0.265\tabularnewline
Trans\_urchin & 45 & 22.218 & 9.12E+02 & 13 & -0.207\tabularnewline
Transc\_yeast & 662 & 478.315 & 3.59E+04 & 440 & -0.410\tabularnewline
USAir97 & 332 & 142.765 & 8.08E+17 & 58 & -0.208\tabularnewline
Ythan1 & 134 & 58.374 & 1.86E+07 & 23 & -0.263\tabularnewline
Ythan2 & 92 & 41.326 & 7.07E+06 & 22 & -0.322\tabularnewline
Zackar & 34 & 15.994 & 1.04E+03 & 10 & -0.476\tabularnewline
\hline 
\end{longtable}

\textbf{Table 1. }Dataset of real-world networks studied in this paper,
their size $n$, Gaussian Estrada index $H$, exponential Estrada
index $EE,$ graph nullity $\eta$, and degree assortativity $r$.
\label{Table-1.-Dataset}

\bigskip{}

The largest value of $\hat{H}$ corresponds to the food web of Bridge
Brooks, which displays the second highest normalized nullity. It is
followed by the transcription network of yeast (displaying the highest
value of the normalized nullity) and the versions of Internet at Autonomous
System (AS) of 1997 and 1998. The three networks display triangular
eigenvalue distributions peaked at the zero eigenvalue which explains
their large values of the $\hat{H}$ index. However, while the yeast
transcription network and the Internet at AS have fat-tailed degree
distributions, the Bridge Brooks food web displays a uniform one.
Thus, the existence of large values of the $\hat{H}$ index is not
tied up to the existence of fat-tailed degree distributions. Most
of the networks (75.4\%) have values of the $\hat{H}$ index below
0.5. That is, only 15 networks out of 61 have $\hat{H}\geq0.5$. Among
these 15 networks there are 4 of the 7 protein-protein interaction
networks (PINs) studied and two of the three transcription networks
studied. Thus, almost half of the networks with $\hat{H}\geq0.5$
represent biological systems containing proteomic or transcriptomic
information. The other transcription network studied has $\hat{H}\approx0.494$
and the other 3 PINs have values of $\hat{H}$ ranging between 0.36
and 0.44. It is interesting to explore the main structural causes
for these high values of the $\hat{H}$ index. In previous sections
we have found that the main structural characteristic determining
the high values of this index is the presence of bicliques, e.g. the
highest value of $\hat{H}$ is obtained for complete bipartite graphs,
also the BA networks display larger $\hat{H}$ index that the ER ones
due to the presence of complete bipartite subgraphs created during
the evolution of the preferential attachment mechanism. Consequently,
we should expect that such kind of subgraphs appear in those real-world
networks having the largest $\hat{H}$ index. In the case of the food
web of Bridge Brook we have found a biclique consisting of two sets
of nodes $V_{1}$ and $V_{2}$with cardinalities of 6 and 35 nodes,
respectively. This subgraph represents a biclique $K_{6,35}$ which
contains 55\% of the total number of nodes in the network. There are
also other smaller bicliques in this network, which together with
the $K_{6,35}$ contribute to the large $\hat{H}$ value observed.
In the cases of the yeast transcription network and the Internet at
AS, the networks are characterized by having a few hubs connected
to many nodes of degree one, then producing bicliques of the type
$K_{1,n_{2}}.$ In general these findings can be understood on the
basis of different mechanisms which give rise to the existence of
bicliques in real-world networks. For instance, in some food webs
there are top predators which compete for a group of preys. If for
this group of species there are no prey-prey nor predator-predator
trophic interactions, the corresponding subgraph is a biclique as
the one observed for the Bridge Brook network previously considered.
In the cases of transcription and PINs the bicliques can be formed
as a consequence of lock-and-key kind of interaction. That is, a group
of proteins (genes) can act as locks (activators) that physically
interact with other proteins (activate other genes) acting as keys.
Such kind of interactions is prone to produce relatively large bicliques
in the structure of the networks resulting from them. 

On the other hand, among the networks with $\hat{H}\leq0.3$ we find
the network of corporate directors, the three neuronal networks studied,
i.e., macaque and cat visual cortex and the neuronal network of \textit{C.
elegans}, as well as some social networks and food webs. Also, the
three electronic circuits studied here also display values of $\hat{H}$
index around 0.3. These networks are characterized by the lack of
complete bipartite subgraphs and they may represent a variety of topologies
difficult to be reproduced by a single mechanism. 

Finally we would like to remark a few important characteristics of
the Gaussian matrix function of a network that point out to the necessity
of further studies of it for real-world networks and simple graphs
in general. The first, is our observation that although networks with
fat-tailed degree distribution may give rise to high values of the
$\hat{H}$ index, it is not a necessary condition for a network to
display such a characteristic. We have seen that networks with exponential
and even uniform degree distributions display large values of the
$\hat{H}$ index. Another structural parameter that could be related
to the $\hat{H}$ index is the degree assortativity, i.e., the Pearson
correlation coefficient of the degree-degree distribution of a network.
We have explored such relation between the $\hat{H}$ index and the
assortativity for the 61 networks studied here. We have found that
the two parameters are negatively correlated. That is, high values
of the $\hat{H}$ index in general implies that the networks are disassortative,
i.e., there is a trend of high degree nodes to be connected to low
degree ones. This is understandable on the basis of our findings that
bicliques of the type $K_{1,n_{2}}$ plays a fundamental role in the
value of the $\hat{H}$ index. However, the correlation is very weak
and displays a Pearson correlation coefficient of -0.68. Thus, further
explorations---both theoretical and computational---of the relation
of the $\hat{H}$ index and other network parameters are necessary
for a complete understanding of this index and its application in
network theory.

\section{Summary}

Most of the works using matrix functions for studying graphs are concentrated
on the use of the exponential and the resolvent of the adjacency matrix
of the graph. Other functions such as the hyperbolic sine and cosine,
and $\psi$-matrix functions have also been reported. All these matrix
functions give more weight to the largest eigenvalue and the corresponding
eigenvector of the adjacency matrix than to the rest of eigenvalues/eigenvectors.
In many real-world networks, where the spectral gap is relatively
large, this situation gives rise to discarding important structural
information contained in the eigenvalues close to zero in the graph
spectra. Here, we have studied a Gaussian matrix function which accounts
for the information contained in the eigenvalues/eigenvectors close
to zero in the graph spectra. We have shown that such information
is related to the existence of important structural patterns in graphs
which have remained unexplored when studying the structure of complex
networks, such as the existence of relatively large complete bipartite
subgraphs (bicliques). Such bicliques appear naturally in many real-world
networks as well as in the Barab�si-Albert graphs and other networks
with fat-tailed degree distributions. In this work we have concentrated
in the theoretical characterization of the networks displaying the
largest Gaussian Estrada index---an index characterizing the importance
of eigenvalues close to zero. Other extensions to give more weight
to other specific eigenvalues/eigenvectors of the adjacency matrix
are under development. We hope this work will open new research interest
in the study of matrix functions for the structural characterization
of graphs.


\begin{thebibliography}{10}
\bibitem{Function of matrices}N. J. Higham, \textit{Functions of
Matrices: Theory and Computation}. (Society for Industrial and Applied
Mathematics, Philadelphia, PA, 2008).

\bibitem{Estrada Higham}E. Estrada, and D. J. Higham, ``Network
properties revealed through matrix functions,'' SIAM Rev. \textbf{52},
696--714 (2010).

\bibitem{Communicability}E. Estrada, and N. Hatano, ``Communicability
in complex networks,'' Phys. Rev. E \textbf{77,} 036111 (2008).

\bibitem{Protein folding}E. Estrada, ``Characterization of the folding
degree of proteins,'' Bioinformatics \textbf{18}, 697--704 (2002). 

\bibitem{Subgraph Centrality}E. Estrada, and J.A. Rodr�guez-Vel�zquez,
``Subgraph centrality in complex networks,'' Phys. Rev. E, \textbf{71},\textbf{
}056103 (2005).

\bibitem{Estrada Hatano Benzi}E. Estrada, N. Hatano, and M. Benzi,
``The physics of communicability in complex networks,'' Phys. Rep.
\textbf{514}, 89--119 (2012).

\bibitem{Katz centrality}L. Katz, ``A new index derived from sociometric
data analysis,'' Psychometrika \textbf{18}, 39--43 (1953).

\bibitem{Estrada index}J.A. de la Pe�a, I. Gutman, and J. Rada, ``Estimating
the Estrada index,'' Lin. Algebra Appl. \textbf{427}, 70--76 (2007).

\bibitem{Estrada index 1}H. Deng, S. Radenkovi\'{c}, and I. Gutman,
``The Estrada index,'' in \textit{Applications of Graph Spectra},
edited by D. Cvetkovi\'{c}, I. Gutman, (Math. Inst, Belgrade, 2009),
pp. 123--140.

\bibitem{Estrada index 2}I. Gutman, H. Deng, and S. Radenkovi\'{c},
``The Estrada index: an updated survey,'' in \textit{Selected Topics
on Applications of Graph Spectra}, edited by D. Cvetkovi\'{c}, I.
Gutman, (Math. Inst, Belgrade, 2011), pp. 155--174.

\bibitem{Natural connectivity 1}J. Wu, M. Barahona, Y-J. Tan, and
H-Z. Deng. ``Robustness of regular ring lattices based on natural
connectivity,'' Int. J. Syst. Sci. \textbf{42}, 1085-1092, 2011.

\bibitem{Natural connectivity 2}J. Wu, H.-Z. Deng, Y.-J. Tan, and
D.-Z. Zhu, ``Vulnerability of complex networks under intentional
attack with incomplete information,'' J. Phys. A: Math. Theor. \textbf{40},
2665 (2007).

\bibitem{Benzi 1}M. Benzi, E. Estrada, and C. Klymko, ``Ranking
hubs and authorities using matrix functions,'' Lin. Algebra Appl.,
\textbf{438}, 2447-2474(2013).

\bibitem{Benzi 2}M. Benzi, and C. Klymko, ``On the limiting behavior
of parameter-dependent network centrality measures,'' SIAM J. Matrix
Anal. Appl. \textbf{36}, 686-706 (2015).

\bibitem{Benzi 3}F. Arrigo, and M. Benzi, ``Updating and downdating
techniques for optimizing network communicability,'' SIAM J. Sci.
Comp.\textbf{ 38}, B25-B49 (2016).

\bibitem{Michaelbook}E. Estrada, ``Graphs and Networks,'' in \textit{Mathematical
Tools for Physicists}, edited by M. Grinfeld, (John Wiley \& Sons,
2014), pp. 111--157

\bibitem{EstradaBook}E. Estrada, \textit{The Structure of Complex
Networks. Theory and Applications}. (Oxford University Press, 2011). 

\bibitem{LucianoReview}L. F. Costa, O. N. Oliveira Jr, and G. Travieso,
\textit{et al}. ``Analyzing and modeling real-world phenomena with
complex networks: a survey of applications,'' Adv. Phys. \textbf{60},
329--412 (2011).

\bibitem{Zooming in and out} E. Estrada, ``Generalized walks-based
centrality measures for complex biological networks,'' J. Theor.
Biol. \textbf{263},556--565 ( 2010).

\bibitem{HMO_1}W. Kutzelnigg, ``What I like about H�ckel theory,''
J. Comput. Chem. \textbf{28}, 25--34 (2007).

\bibitem{HMO_2}K. Yates, \textit{H�ckel molecular orbital theory}.
(Elsevier, 2012).

\bibitem{Graph nullity review}B. Borovi\'{c}anin, and I. Gutman,
``Graph nullity,'' in \textit{Selected Topics on Applications of
Graph Spectra}, edited by D. Cvetkovi\'{c}, I. Gutman, (Math. Inst,
Belgrade, 2011), pp. 155--174.

\bibitem{Fukui}K. Fukui, T. Yonezawa, H. Shingu, ``A Molecular Orbital
Theory of Reactivity in Aromatic Hydrocarbons,'' J. Chem. Phys. \textbf{20},
722--725 (1952). 

\bibitem{Fukui_2}K. Fukui, ``The role of frontier orbitals in chemical
reactions (Nobel Lecture),'' Angewandte Chemie Int. Ed. Engl. \textbf{21},
801-809 (1982).

\bibitem{Frontiers orbitals}I. Fleming, \textit{Frontier Orbitals
and Organic Chemical Reactions}. (London: Wiley, 1978), pp.�29\textendash 109.

\bibitem{Negative temperature} E. Estrada, D. J. Higham, and N. Hatano,
``Communicability betweenness in complex networks,'' Physica A \textbf{388},
764--774 (2009)

\bibitem{Gaussian operator 1}J. E. Hirsch, and J. R. Schrieffer,
``Dynamic correlation functions in quantum systems: A Monte Carlo
algorithm,'' Phys. Rev. B \textbf{28}, 5353--5356 (1983).

\bibitem{Gaussian operator 2}E. R. Gagliano and C. A. Balseiro, ``Dynamic
correlation functions in quantum many-body systems at zero temperature,''
Phys. Rev. B \textbf{38}, 11766--11773 (1988).

\bibitem{renormalized}J. E. Barrios-Vargas and G. G. Naumis, ``Doped
graphene: the interplay between localization and frustration due to
the underlying triangular symmetry,'' J. Phys.: Condens. Matter\textbf{
23}, 375501 (2011).

\bibitem{Estrada Benzi Gaussian}E. Estrada, and M. Benzi, ``Atomic
displacements due to spin-spin repulsion in conjugated alternant hydrocarbons,''
Chem. Phys. Lett. \textbf{568-569}, 184--189 (2013).

\bibitem{Benzi_Boito}M. Benzi, P. Boito, ``Quadrature rule-based
bounds for functions of adjacency matrices,'' Lin. Algebra Appl.
\textbf{433}, 637--652 (2010).

\bibitem{quadrature rule 1}M. Benzi, and G. H. Golub, ``Bounds for
the entries of matrix functions with application to preconditioning,''
BIT \textbf{39}, 417--438 (1999). 

\bibitem{quadrature rule 2}M. Benzi, and P. Boito, ``Quadrature
rule-based bounds for functions of adjacency matrices,'' Lin. Algebra
Appl. \textbf{433}, 637--652 (2010).

\bibitem{quadrature rule 3}G. H. Golub, and G. Meurant, \textit{Matrices,
Moments and Quadrature with Applications}. (Princeton University Press,
Princeton, NJ 2010).

\bibitem{Collatz and Sinogowitz}L. Collatz, and U. Sinogowitz, ``Spektren
Endlicher Grafen,'' Abh. Math. Sem. Univ. Hamburg \textbf{21}, 63--77
(1957).

\bibitem{ER model}P. Erd\H{o}s, A. R�nyi, ``On Random Graphs,''
I, Publicationes Mathematicae \textbf{6}, 290--297 (1959).

\bibitem{Gilbert}E. N. Gilbert, ``Random Graphs,'' Annals Math.
Stat. \textbf{30}, 1141--1144 (1959).

\bibitem{BA model}A.-L. Barab�si, R. Albert, ``Emergence of scaling
in random networks,'' Science \textbf{286}, 509--512 (1999).

\bibitem{Wigner}E. Wigner, ``Characteristic vectors of bordered
matrices with infinite dimensions,'' Ann. of Math. \textbf{62}, 548--564
(1955). 

\bibitem{Krivelevich and Sudakov}M. Krivelevich, and B. Sudakov,
``The largest eigenvalue of sparse random graphs,'' Combin. Probab.
Comput. \textbf{12}, 61--72 (2003).

\bibitem{Milo datasets}R. Milo, S. Shen\textendash Orr, S. Itzkovitz,
et al., ``Network motifs: simple building blocks of complex networks,''
Science \textbf{298}, 824\textendash 827 (2002).

\bibitem{Myers Software}C. R. Myers, ``Software systems as complex
networks: Structure, function, and evolvability of software collaboration
graphs,'' Phys. Rev. E \textbf{68}, 046116 (2003).

\bibitem{Lockwood} J. L. Lockwood, R. D. Powell, M. P. Nott, and
S. L. Pimm, ``Assembling ecological communities in time and space,''
Oikos 549-553 (1997).

\bibitem{Rossberg et al}A. G. Rossberg, H. Matsuda, T. Amemiya, and
K. Itoh, ``Food webs: experts consuming families of experts,'' J.
Theor. Biol. \textbf{241}, 552-563 (2006). 

\bibitem{Brain networks datasets}O. Sporns, and R. K�tter. ``Motifs
in brain networks,'' PLoS Biology\textbf{ 2}, e369 (2004).

\bibitem{Skipwith}P. Yodzis, ``Diffuse effects in food webs,''
Ecology \textbf{81}, 261-266 (2000). 

\bibitem{Bridge Brooks}G. A. Polis, ``Complex trophic interactions
in deserts: an empirical critique of food-web theory,'' Am. Nat.
\textbf{138}, 123-155 (1991).

\bibitem{Canton}C. Townsend, R. M. Thompson, and A. R. McIntosh,
et al. ``Disturbance, resource supply, and food-web architecture
in streams,'' Ecol. Lett. \textbf{1}, 200 (1998).

\bibitem{Chesapeake}R. R. Christian, and J. J. Luczkovich, ``Organizing
and understanding a winter's seagrass foodweb network through effective
trophic levels,'' Ecol. Model. \textbf{117}, 99-124 (1999). 

\bibitem{Coachella} P. H. Warren,``Spatial and temporal variation
in the structure of a fresh-water food web,'' Oikos \textbf{55},
299-311 (1989). 

\bibitem{ElVerde}R. B. Waide, and W. B. Reagan, (Eds.) \textit{The
Food Web of a Tropical Rainforest}. (University Chicago Press, Chicago,
1996).

\bibitem{Grassland}N. D. Martinez, B. A. Hawkins, H. A. Dawah, and
B. P. Feifarek, ``Effects of sampling efforts on characterization
of food web structure,'' Ecology \textbf{80}, 1044\textendash 1055
(1999).

\bibitem{LittleRock}K. Havens, ``Scale and structure in natural
food webs,'' Science \textbf{257}, 1107-1109 (1992).

\bibitem{Reef Small}S. Opitz, ``Trophic Interactions in Caribbean
coral reefs,'' ICLARM Tech. Rep. 43, Manila, Philippines, (1996). 

\bibitem{Scotch Broom}J. Memmott, N. D. Martinez, and J. E. Cohen,
``Predators, parasites and pathogens: species richness, trophic generality,
and body sizes in a natural food web,'' J. Animal Ecol.\textbf{ 69},
1-15 (2000). 

\bibitem{Shelf}J. Link, ``Does food web theory work for marine ecosystems?,''
Mar. Ecol. Prog. Ser. 230, 1-9 (2002). 

\bibitem{Benguela} P. Yodzis, ``Local trophodynamics and the interaction
of marine mammals and fisheries in the Benguela ecosystem,'' J. Anim.
Ecol. \textbf{67}, 635-658 (1998). 

\bibitem{StMarks}L. Goldwasser, and J. A. Roughgarden, ``Construction
and analysis of a large Caribbean food web,'' Ecology \textbf{74},
1216-1233 (1993). 

\bibitem{StMartin}N. D. Martinez, ``Artifacts or attributes? Effects
of resolution on the Little Rock Lake food web,'' Ecol. Monogr. \textbf{61},
367-392 (1991).

\bibitem{Stony}D. Baird, and R. E. Ulanowicz, ``The seasonal dynamics
of the. Chesapeake Bay ecosystem,'' Ecol. Mon. \textbf{59}, 329-364
(1989). 

\bibitem{Ythan1}M. Huxman, S. Beany, and D. Raffaelli, ``Do parasites
reduce the chances of triangulation in a real food web?,'' Oikos
\textbf{76}, 284-300 (1996). 

\bibitem{Ythan2}S. J. Hall, and D. Rafaelli, ``Food-web patterns
- lessons from a species-rich web,'' J. Anim. Ecol. \textbf{60,}
823\textendash 842 (1991).

\bibitem{centrality}N. P. Hummon, P. Doreian, and L. C. Freeman,
``Analyzing the structure of the centrality-productivity literature
created between 1948 and 1979,'' Know.-Creat. Diffus. Util. \textbf{11},
459-480 (1990).

\bibitem{deNooy}W. de Nooy, A. Mrvar, and V. Batagelj, Exploratory
Social Network Analysis with Pajek, Cambridge University Press, Cambridge
(2005). 

\bibitem{GD}V. Batagelj, and A. Mrvar, ``Graph Drawing Contest 2001,''
http://vlado.fmf.uni-lj.si/pub/GD/GD01.htm. (2001).

\bibitem{ODLIS}''ODLIS: Online Dictionary of Library and Information
Science,'' http://vax.wcsu.edu/library/odlis.html. (2002). 

\bibitem{Roget}Roget\textasciiacute s Thesaurus of English Words
and Phrases, Project Gutenberg. http://gutenberg.net/etext/22 (2002). 

\bibitem{Batagelj Mrvar}V. Batagelj, and A. Mrvar, Pajek datasets.
Available at: http://vlado.fmf.uni-lj.si/pub/networks/data/ (2006). 

\bibitem{KSHV}P. Uetz, Y.-A. Dong, and Ch. Zeretzke, et al., ``Herpesviral
protein networks and their interaction with the human proteome,''
Science \textbf{311}, 239-242 (2006). 

\bibitem{malaria}D. LaCount, M. Vignali, and R. Chettier, et al.
``A protein interaction network of the malaria parasite Plasmodium
falciparum,'' Nature \textbf{438}, 103\textendash 107 (2005).

\bibitem{PIN_yeast_1}D. Bu, Y. Zhao, and L. Cai, et al. ``Topological
structure analysis of the protein-protein interaction network in budding
yeast,'' Nucleic Acids Res.\textbf{ 31}, 2443-2450 (2003). 

\bibitem{PIN_yeast_2}C. von Mering, R. Krause, and B. Snel, et al.
``Comparative assessment of large-scale data sets of protein-protein
interactions,'' Nature \textbf{417}, 399-403 (2002). 

\bibitem{PIN_Afulgidus}M. Motz, I. Kober, and C. Girardot, et al.
``Elucidation of an Archaeal Replication Protein Network to Generate
Enhanced PCR Enzymes,'' J. Biol. Chem. \textbf{277}, 16179\textendash 16188
(2002). 

\bibitem{PIN_Hpylori}C. Y. Lin, C. L. Chen, C. S. Cho, et al. ``hp-DPI:
Helicobacter pylori database of protein interactomes. A combined experimental
and inferring interactions,'' Bioinformatics \textbf{21}, 1288\textendash 1290
(2005).

\bibitem{PIN_Ecoli}G. Bultland, J. M. Peregr�n-Alvarez, and J. Li,
et al. ``Interaction network containing conserved and essential protein
complexes in Escherichia coli,'' Nature\textbf{ 433}, 531-537 (2005). 

\bibitem{PIN_Bsubtilis}P. Noirot, and N. F. Noirot-Gross, ``Protein
interaction networks in bacteria,'' Curr. Op. Microb. \textbf{7},
505\textendash 512 (2004). 

\bibitem{MiloDatasets2}R. Milo, S. Itzkovitz, and N. Kashtan, et
al. ``Superfamilies of evolved and designed networks,'' Science
\textbf{303}, 1538-1542 (2004).

\bibitem{MiloDatasets3}S. Shen-Orr, R. Milo, S. Mangan, and U. Alon,
``Network motifs in the transcriptional regulation network of Escherichia
coli,'' Nature Gen. \textbf{31}, 64-68 (2002).

\bibitem{Corporate}G. F. Davis, M. Yoo, and W. E. Baker, ``The Small
World of the American Corporate Elite, 1982-2001,'' Strategic Organization
\textbf{1}, 301-326 (2003). 

\bibitem{Prison}D. MacRae, ``Direct factor analysis of sociometric
data,'' Sociometry \textbf{23}, 360-371 (1960). 

\bibitem{Dolphins}D., Lusseau, ``The emergent properties of a dolphin
social network,'' Proc. R. Soc. Lond. B (Suppl.) \textbf{270}, 186-188
(2003). 

\bibitem{Moody}J. Moody, Data for this project was provided in part
by NIH grants DA12831 and HD41877, those interested in obtaining a
copy of these data should contact James Moody (moody.77@sociology.osu.edu),
(2001).

\bibitem{Zachary}W. Zachary, ``An information flow model for conflict
and fission in small groups,'' J. Anthropol. Res. \textbf{33}, 452-473
(1977). 

\bibitem{College}L. D. Zeleny, ``Adaptation of research findings
in social leadership to college classroom procedures,'' Sociometry
\textbf{13}, 314-328 (1950). 

\bibitem{ColoSpg} J. J. Potterat, L. Philips-Plummer, and S. Q. Muth,
et al. ``Risk network structure in the early epidemic phase of HIV
transmission in Colorado Springs,'' Sex. Transm. Infect. \textbf{78},
i159-i163 (2002). 

\bibitem{HighTech}D. Krackhardt, ``The ties that torture: Simmelian
tie analysis in organizations'' Res. Sociol. Org. \textbf{16}, 183-210
(1999). 

\bibitem{SawMill}J. H. Michael, and J. G. Massey, ``Modeling the
communication network in a sawmill'' Forest Prod. J. \textbf{47},
25-30 (1997).

\bibitem{Internet}M. Faloutsos, P. Faloutsos, and C. Faloutsos, ``On
power-law relationships of the internet topology'' Comp. Comm. Rev.\textbf{
29}, 251-262 (1999).

\bibitem{WattsStrogatz}D. J. Watts, and S. H. Strogatz, ``Collective
dynamics of small-world networks'' Nature \textbf{393}, 440-442 (1998).

\bibitem{Termite Moulds}A. Perna, S. Valverde, and J. Gautrais, et
al. ``Topological efficiency in the three-dimensional gallery networks
of termite nests,'' Physica A \textbf{387}, 6235-6244 (2008).

\bibitem{Ants}J. Buhl, J. Gautrais, and R. V. Sol�, et al. ``Efficiency
and robustness in ant networks of galleries,'' Eur. Phys. J. B \textbf{42},
123-129 (2004). \end{thebibliography}
\end{document}